\newcommand{\Id}{\mathbbm{1}}
\newcommand{\Pb}{\mathbbm{P}}
\newcommand{\e}{\varepsilon}
\newcommand{\I}{{\rm i}}
\newcommand{\C}{\mathbb{C}}
\newcommand{\R}{\mathbb{R}}
\newcommand{\N}{\mathbb{N}}
\newcommand{\Z}{\mathbb{Z}}
\newcommand{\dx}{\mathrm{d}}
\newcommand{\X}{\mathfrak X}
\renewcommand{\Re}{\mathrm{Re}}
\renewcommand{\Im}{\mathrm{Im}}
\DeclareMathOperator*{\diag}{diag}
\DeclareMathOperator*{\virt}{virt}
\DeclareMathOperator*{\Tr}{Tr}
\newtheorem{prop}{Proposition}[section]
\newtheorem{thm}[prop]{Theorem}
\newtheorem{lem}[prop]{Lemma}
\newtheorem{cla}[prop]{Claim}
\newtheorem{rem}[prop]{Remark}
\newenvironment{remark}{\begin{rem}\normalfont}{\end{rem}}
\title{On the partial connection between random matrices and interacting particle systems}
\author{Patrik L. Ferrari\thanks{Institute for Applied Mathematics, University of Bonn, Endenicher Allee 60,\newline 53115 Bonn, Germany; E-mail:~\texttt{ferrari@uni-bonn.de}},
Ren\'e Frings\thanks{Institute for Applied Mathematics, University of Bonn, Endenicher Allee 60,\newline 53115 Bonn, Germany; E-mail:~\texttt{frings@uni-bonn.de}}}
\date{10. September 2010}
\begin{document}
\sloppy
\maketitle

\begin{abstract}
In the last decade there has been increasing interest in the fields of random matrices, interacting particle systems, stochastic growth models, and the connections between these areas. For instance, several objects appearing in the limit of large matrices arise also in the long time limit for interacting particles and growth models. Examples of these are the famous Tracy-Widom distribution functions and the Airy$_2$ process.

The link is however sometimes fragile. For example, the connection between the eigenvalues in the Gaussian Orthogonal Ensembles (GOE) and growth on a flat substrate is restricted to one-point distribution, and the connection breaks down if we consider the joint distributions.

In this paper we first discuss known relations between random matrices and the asymmetric exclusion process (and a $2+1$-dimensional extension). Then, we show that the correlation functions of the eigenvalues of the matrix minors for $\beta=2$ Dyson's Brownian motion have, when restricted to increasing times and decreasing matrix dimensions, the same correlation kernel as in the $2+1$-dimensional interacting particle system under diffusion scaling limit. Finally, we analyze the analogous question for a diffusion on (complex) sample covariance matrices.
\end{abstract}

\newpage

\section{Introduction}
In the seminal paper~\cite{BDJ99} Baik, Deift, and Johansson prove that the longest increasing subsequence of a random permutation has fluctuations governed by the (GUE) Tracy-Widom distribution $F_2$. This distribution was discovered by Tracy and Widom as the one describing the fluctuations of the largest eigenvalue of random matrices from the Gaussian Unitary Ensemble (GUE)~\cite{TW94}.

Soon after, Johansson~\cite{Jo00b} showed that the same limiting distribution occurs in a stochastic growth model, which is equivalent to (a discrete time version of) the totally asymmetric simple exclusion process (TASEP) and belongs to Kardar-Parisi-Zhang universality class~\cite{KPZ86} of interacting particle systems~\cite{Li99}. This was the beginning of a lot of activities in this field located at the intersection between random matrices, stochastic growth models and interacting particle systems. For a recent review and a guide to literature on the subject, we refer to~\cite{FS10} or, for a review around TASEP, see~\cite{Fer07}. Our results are Theorem~\ref{ThmMainGUE} and Theorem~\ref{ThmMainWishart}, but before going in further explanations, we briefly define the models under consideration. We restrict the discussion to the TASEP and Gaussian ensembles of random matrices.

\subsubsection*{Gaussian ensembles of random matrices and Dyson's Brownian motion (DBM)}
(a) \emph{Hermitian matrices.}
The GUE ensemble of random matrices is defined\footnote{There are two other very common standard normalization in random matrix literature, see Table~1 in~\cite{Fer10}.} as the probability measure on $N\times N$ Hermitian matrices $H$ given by
\begin{equation}\label{eq1}
\frac{1}{Z_N} \exp\left(-\frac{\beta}{4N} \Tr(H^2)\right) \dx H,\quad \textrm{with}\quad \beta=2,
\end{equation}
where $\dx H=\prod_{i=1}^N \dx H_{i,i}\prod_{1\leq i<j\leq N} \dx\Re(H_{i,j}) \dx\Im(H_{i,j})$ and $Z_N$ is the normalization constant\footnote{Another way to describe (\ref{eq1}) is to take the upper-triangular entries to be independent and normal distributed: $H_{i,i}\sim {\cal N}(0,N)$ for $i=1,\dots,N$, while $\Re(H_{i,j})\sim {\cal N}(0,N/2)$ and $\Im(H_{i,j})\sim {\cal N}(0,N/2)$ for $1\leq i < j \leq N$.}. Notice that the measure (\ref{eq1}) is unitary invariant.

Dyson~\cite{Dys62} considered a Brownian motion on the space of matrices. More precisely, set $b_{i,j}(t):=b_{i,j}^1(t)+\I b_{i,j}^2(t)$, where $b_{i,j}^1(t)$ and $b_{i,j}^2(t)$, $1\leq i,j\leq N$, are independent standard Brownian motions. The matrix $B(t)$ with entries  \mbox{$B_{i,j}(t):=\frac12(b_{i,j}(t)+\overline{b_{j,i}(t)})$} is a matrix-valued Brownian motion on Hermitian matrices. The stationary matrix-valued Ornstein-Uhlenbeck process defined by
\begin{equation}\label{eq2}
\dx H(t)=-\frac{\beta}{4N} H(t) \dx t+\dx B(t),\quad \textrm{with }\beta=2,
\end{equation}
is called \emph{$\beta=2$ Dyson's Brownian motion} and its stationary measure is (\ref{eq1}).

(b) \emph{Symmetric matrices.}
The Gaussian Orthogonal Ensemble is a measure on $N\times N$ symmetric matrices with probability measure as in (\ref{eq1}) but with $\beta=1$ (and, of course, with $\dx H=\prod_{1\leq i\leq j\leq N} \dx H_{i,j}$). Similarly, one defines the $\beta=1$ DBM by (\ref{eq2}) with $\beta=1$ (and $b_{i,j}^2(t)=0$).

\subsubsection*{Continuous time TASEP}
The continuous time TASEP is a Markov process defined on the space \mbox{$\Omega=\{0,1\}^\Z$}. For a configuration $\eta(t)\in \Omega$, we say that at position $j$ and time $t$ there is a particle if $\eta_j(t)=1$, otherwise the position is empty. The dynamics is the following: particles jumps to their neighboring right site with rate $1$, provided the site is empty. Let $f:\Omega\to\R$ be a function depending on a finite number of $\eta_j$'s. Then, the backward generator $L$ of TASEP is given by
\begin{equation}
L f(\eta)=\sum_{j\in\Z} \eta_j(1-\eta_{j+1})\left(f(\eta^{j,j+1})-f(\eta)\right)
\end{equation}
where $\eta^{j,j+1}$ is the configuration $\eta$ with the occupations at sites $j$ and $j+1$ interchanged. $e^{Lt}$ is the transition probability of the TASEP, see~\cite{Li85b,Li99} for more details on the construction. In the following we will discuss results for two specific initial conditions:\\
(a) \emph{step initial conditions}: $\eta_j(0)=1$ for $j<0$ and $\eta_j(0)=0$ for $j\geq 0$,\\
(b) \emph{alternating initial conditions}: $\eta_j(0)=1$ for even $j$ and $\eta_j(0)=0$ for odd~$j$.

\bigskip
In which cases do we have the same limit processes in TASEP and Gaussian random matrices?
The probably most famous result is the convergence to the GUE Tracy-Widom distribution $F_2$: Let $\lambda_{{\rm max,N}}^{\rm GUE}$ the largest eigenvalue of GUE $N\times N$ matrices. Then~\cite{TW94},
\begin{equation}\label{eq4}
\lim_{N\to\infty} \Pb(\lambda_{{\rm max,N}}^{\rm GUE}\leq 2N+s N^{1/3})=F_2(s).
\end{equation}
The analogous result for TASEP occurs for step initial conditions\footnote{A similar result for the partially asymmetric exclusion process has been recently determined~\cite{TW08b}.}. Let $x_n(t)$ denote the position at time $t$ of the particle starting from position \mbox{$x_n(0)=-n$}. Then,
\begin{equation}
\lim_{t\to\infty} \Pb(x_{[t/4]}(t)\leq -s (t/2)^{1/3})=F_2(s).
\end{equation}
This connection extends to joint distributions~\cite{Jo03b,BFS07}. Let $\lambda^{\rm GUE}_{{\rm max},N}(t)$ be the largest eigenvalue of $\beta=2$ Dyson's Brownian motion at time $t$. Then,
\begin{equation}
\begin{aligned}
&\lim_{N\to\infty} \frac{\lambda_{{\rm max,N}}^{\rm GUE}(2uN^{2/3})-2N}{N^{1/3}} = {\cal A}_2(u),\\
&\lim_{t\to\infty} \frac{x_{[t/4+u (t/2)^{2/3}]}(t)+2u(t/2)^{2/3}-u^2(t/2)^{1/3}}{-(t/2)^{1/3}}= {\cal A}_2(u),
\end{aligned}
\end{equation}
in the sense of finite-dimensional distributions, where ${\cal A}_2$ is the Airy$_2$ process (firstly obtained in a stochastic growth model by Pr\"ahofer and Spohn~\cite{PS02}; see also~\cite{Fer07} for a definition and properties).

For GOE matrices, in~\cite{TW96} Tracy and Widom proved that the same rescaling as in (\ref{eq4}) leads to a well-defined limit denoted by $F_1$  and called the GOE Tracy-Widom distribution function:
\begin{equation}
\lim_{N\to\infty} \Pb(\lambda_{{\rm max,N}}^{\rm GOE}\leq 2N+s N^{1/3})=F_1(s).
\end{equation}
An analogous result holds for TASEP with alternating initial conditions. Namely, let $x_n(t)$ be the position at time $t$ of the particle starting from $x_n(0)=-2n$. Then\footnote{This was proven in~\cite{Sas05,BFPS06}, but for a related point-to-line last passage percolation model, it was obtained by Baik and Rains before~\cite{BR99b}, see also~\cite{PS00} for the interpretation as growth process on a flat substrate.},
\begin{equation}
\lim_{t\to\infty} \Pb(x_{[t/4]}(t)\leq -s t^{1/3}/2)=F_1(s).
\end{equation}
One might then hope for an extension to the joint distributions of this relation. As shown in~\cite{Sas05,BFPS06}
\begin{equation}
\lim_{t\to\infty} \frac{x_{[t/4+u t^{2/3}]}(t)+2u(t/2)^{2/3}}{-t^{1/3}}= {\cal A}_1(u),
\end{equation}
in the sense of finite-dimensional distributions, where ${\cal A}_1$ is the Airy$_1$ process (see also~\cite{Fer07} for a definition and properties). However, a convincing numerical evidence~\cite{BFP08} shows that
\begin{equation}\label{eq10}
\lim_{t\to\infty} \frac{\lambda_{{\rm max,N}}^{\rm GOE}(8uN^{2/3})-2N}{2N^{1/3}}\neq {\cal A}_1(u).
\end{equation}
The covariance of ${\cal A}_1(u)$ decays super-exponentially fast in $u$, while the one for the process in l.h.s.\ of (\ref{eq10}) only polynomially.

\newpage
These results give rise to a number of questions:
\begin{itemize}
\item Is this link between TASEP and Gaussian ensembles of random matrices just accidential or do they share a common underlying structure?
\item At which point does this variety of connections come to an end?
\end{itemize}
The link between GOE and TASEP with alternating initial condition seems to be restricted to the static case\footnote{In a related stochastic growth model, the connection extends from the statistics of the largest eigenvalue to the one of the the top eigenvalues~\cite{Fer04}. However, it is restricted to fixed time.}. On the other hand, as we shall discuss below, GUE and TASEP with step initial condition have a much stronger relation, which can be seen comparing a $2+1$ dimensional extension of TASEP with the eigenvalues of the GUE minors. However, also this connection is only partial: The Markov property at the level of eigenvalues' minor does not hold in general as proven in~\cite{ANvM10b} (see also Remark~11.1 in~\cite{Def08}), while it holds for the interacting particle system described below.

\subsubsection*{$2+1$ dynamics on interlaced particle systems}
An extension of TASEP with step initial condition to a dynamics on a set of interlaced particle system has been introduced in~\cite{BF08}. We denote by $x_k^m(t)$ the position at time $t$ of the $k$th leftmost particle at level $m$, $1\leq k \leq m \leq n$. As initial condition we have $x_k^m(0)=k-m-1$ and the configuration space of the system with $n$ levels is
\begin{equation}
{\cal S}^{(n)}=\{x_k^m\in \Z\, |\, x_k^{m+1} < x_k^m\leq x_{k+1}^{m+1}, 1\leq k \leq m\leq n\}.
\end{equation}
The dynamics is as follows: Each particle $x_k^m$ has an independent exponential clock of rate one, and when the $x_k^m$-clock rings, the particle attempts to jump to the right by one. If at that moment $x_k^m=x_k^{m-1}-1$, then the jump is blocked. If that is not the case, we find the largest $c\geq 1$ such that $x_k^m=x_{k+1}^{m+1}=\dots=x_{k+c-1}^{m+c-1}$, and all $c$ particles in this string jump to the right by one.

Both the evolution on ${\cal S}^{(n)}$ and its projection onto $\{x_1^m, m\geq 1\}$ are Markov processes, where the second is nothing else but the TASEP with step initial conditions described above. The space-time correlation functions for this model are not completely known. However, if we restrict ourselves to so-called space-like paths they are determinantal. Introduce the notation
\begin{equation}\label{eqPartialOrder}
(n_1,t_1)\prec (n_2,t_2) \quad\text{iff} \quad n_1\leq n_2,
t_1\geq t_2,\text{ and }(n_1,t_1)\neq (n_2,t_2).
\end{equation}
We say that $(n_1,t_1)$ and $(n_2,t_2)$ are \emph{space-like} if either $(n_2,t_2)\prec (n_1,t_1)$ or $(n_1,t_1)\prec (n_2,t_2)$. Then, a path is called space-like if any two points on it are space-like. The two extreme cases of space-like paths are (1) fixed level $n$ and increasing time $t$ and (2) fixed time $t$ and decreasing level $n$. In~\cite{BF08} it is proven that along space-like paths the correlation functions are determinantal.

\begin{thm}[Theorem~1.1 and Proposition~4.2 of~\cite{BF08}]\label{ThmDetStructure}$ $\\
For any $m=1,2,\dots$, pick $m$ (distinct) triples
\begin{equation}
\varkappa_j=(x_j,n_j,t_j)\in \Z\times\N\times \R_{\geq 0}
\end{equation}
such that
\begin{equation}
t_1\leq t_2\leq \dots\leq t_m,\qquad n_1\geq n_2\geq \dots\geq n_m.
\end{equation}
Then
\begin{multline}
\Pb\{\textrm{For each }j=1,\dots,m \textrm{ there exists a } k_j, \\
 1\leq k_j\leq n_j\textrm{ such that }x^{n_j}_{k_j}(t_j)=x_j\}=\det{[{\cal K}(\varkappa_i,\varkappa_j)]}_{1\leq i,j\leq m},
\end{multline}
where
\begin{multline}
{\cal K}(\varkappa_1;\varkappa_2) =-\frac{1}{2\pi\I}\oint_{\Gamma_{0,1}}\dx w\, \frac{(w-1)^{n_1-n_2}e^{(t_1-t_2)w}}{w^{x_1+n_1-x_2-n_2+1}}\Id_{[(n_1,t_1)\prec (n_2,t_2)]}\\
+\frac{1}{(2\pi\I)^2}\oint_{\Gamma_1}\dx z \oint_{\Gamma_{0,z}}\dx w \, \frac{e^{t_1 w} (1-w)^{n_1}}{w^{x_1+n_1+1}}\frac{z^{x_2+n_2}}{e^{t_2 z}(1-z)^{n_2}}\frac{1}{w-z}
\end{multline}
For a set $A$, $\Gamma_A$ is any simple path positively oriented including as only poles the elements of the set $A$.
\end{thm}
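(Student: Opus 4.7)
The plan is to reduce the assertion to the Eynard--Mehta theorem for measures of product form, following the general framework that has become standard for determinantal interacting particle systems with interlacing. To that end, one encodes the joint distribution of the particles at the space--like points $\{\varkappa_j\}_{j=1}^m$ as the marginal of a larger measure on an enlarged state space obtained by interpolating between consecutive points via elementary transitions.

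Step 1 (fixed-time determinantal structure). At $t=0$ the packed initial condition $x_k^m(0)=k-m-1$ gives a deterministic Gelfand--Tsetlin-type pattern, which can be written as a (degenerate) determinantal point process on $\Z\times\{1,\dots,n\}$. More generally, at any fixed time the joint law of the particles $\{x_k^m(t)\}_{1\le k\le m\le n}$ on the interlacing array is a Schur-like measure with explicit transition weights from time zero, an observation that follows from the commutation of the TASEP generator with the interlacing projection.

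Step 2 (product structure along space--like paths). For consecutive points with either $n_{i}=n_{i+1}$ and $t_i<t_{i+1}$, or $n_i>n_{i+1}$ and $t_i=t_{i+1}$, the conditional transition has a determinantal kernel: in the first case it is the Sch\"utz formula for TASEP at fixed level, in the second case it is the geometric weight enforcing the interlacing $x_k^{m+1}<x_k^m\le x_{k+1}^{m+1}$. Both weights are obtained as residues of $(w-1)^{\Delta n} e^{\Delta t\, w} w^{-\Delta x-\Delta n-1}$, which is precisely the single-contour term in $\mathcal K$ restricted to $(n_1,t_1)\prec(n_2,t_2)$. A general space--like step is then a composition of these two elementary moves; because the composition of determinantal kernels of this exponential-polynomial type is again of the same type, the full transition between space--like points remains a single contour integral.

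Step 3 (Eynard--Mehta and explicit kernel). Assembling the initial fixed-time measure with the chain of space--like transition kernels produces a measure on the interpolated state space of product-of-determinants form. The Eynard--Mehta theorem then yields a correlation kernel of the schematic shape $-\phi^{(\varkappa_1,\varkappa_2)} \Id_{[(n_1,t_1)\prec(n_2,t_2)]} + \sum_{k,\ell}\Psi_k(\varkappa_1)(G^{-1})_{k\ell}\Phi_\ell(\varkappa_2)$. Thanks to the packed initial condition, the Gram matrix $G$ is triangular in the monomial basis $\{w^k\}$, so its inverse is explicit and the double sum collapses, via standard contour-integral manipulations, to the double integral in the statement with the required contour prescription.

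The main obstacle is Step 2: one must verify rigorously that, restricted to the space--like ordering \eqref{eqPartialOrder}, the genuinely two-dimensional pushing-and-blocking dynamics factorizes into a chain of two-point transitions with determinantal kernels. This is exactly where the space--like hypothesis is essential; for time--like pairs no such factorization exists, which is ultimately why no analogous determinantal formula can be written there. The remaining work---the explicit inversion of the Gram matrix and the rewriting of the finite double sum as a double contour integral over $\Gamma_1$ and $\Gamma_{0,z}$---is a routine residue computation once the determinantal framework of Steps~1--3 has been set up.
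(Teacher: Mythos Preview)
This theorem is not proved in the present paper: as its header indicates, it is quoted from~\cite{BF08} and serves only as background for the paper's own results (Theorems~\ref{ThmMainGUE} and~\ref{ThmMainWishart}). There is no proof here to compare your proposal against.

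Your outline is in the right spirit and does match the architecture of~\cite{BF08}: one places the space-like joint law into the product-of-determinants framework reproduced in this paper as Theorem~\ref{ThmPushASEP}, and then uses the triangularity of the Gram matrix coming from the packed initial data to collapse the kernel to the stated double contour integral. Two caveats on Step~2. First, the fixed-level time evolution in this $2+1$ system is not TASEP; the effective single-level dynamics inherited from the lower levels is of PushTASEP/Charlier type, so ``Sch\"utz's formula for TASEP'' is a misattribution even if the resulting one-variable integrand has the same shape. Second, and more substantively, the route you sketch---decompose a general space-like step into an autonomous single-level Markov transition composed with a level-lowering step---is not the mechanism that works, because the projection of the $2+1$ dynamics onto a single level is not a priori Markov (the paper only asserts Markov property for the full array and for the projection onto $\{x_1^m\}$). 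In~\cite{BF08} the product form along space-like paths is obtained instead by an intertwining argument showing that the full $2+1$ generator preserves the class of Gelfand--Tsetlin Gibbs measures; the factorization into the ingredients $\phi_n$, ${\cal T}$, $\Psi^N$ of Theorem~\ref{ThmPushASEP} then drops out of the preserved measure, without ever isolating a single-level transition kernel. You correctly flag this factorization as the crux, but the mechanism for establishing it needs to be reworked along these lines.
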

Under the diffusion scaling limit
\begin{equation}
X_k^n(\tau):=\lim_{t\to\infty} \frac{x_k^n(\tfrac12\tau t)-\tfrac12\tau t}{\sqrt{t}}
\end{equation}
one readily obtains that the correlation functions for the $X_k^n$'s are, along space-like paths, still determinantal with kernel
\begin{multline}\label{eq17}
\widetilde{\cal K}(\xi_1,n_1,\tau_1;\xi_2,n_2,\tau_2) =
-\frac{2}{2\pi\I} \int_{\I\R+\e} \dx w\,\frac{e^{(\tau_1-\tau_2)w^2-2(\xi_1-\xi_2)w}}{w^{n_2-n_1}}\Id_{[(n_1,t_1)\prec (n_2,t_2)]}\\
+\frac{2}{(2\pi\I)^2}\oint_{|z|=\e/2}\dx z \int_{\I\R+\e}\dx w\, \frac{e^{\tau_1 w^2-2\xi_1 w}}{e^{\tau_2 z^2-2\xi_2 z}}\frac{w^{n_1}}{z^{n_2}} \frac{1}{w-z}
\end{multline}
where $\e>0$ is arbitrary.
This kernel for $\tau_1=\tau_2=1$ appeared first for the GUE minors~\cite{JN06} and was shown to occur in TASEP in~\cite{BFS07}. An antisymmetric version (for general $\tau$'s) of this kernel was derived in~\cite{BFS09} (with a slightly different scaling in space) and extends the kernel for the antisymmetric GUE minors of~\cite{FN08}.

\subsubsection*{GUE minors}
The normalization in (\ref{eq1}) is the best suited to make the comparison clear between the scaling limits of large matrices and large time in TASEP. However, if we look at matrices of different sizes, it is more natural to drop the $N$-dependence in the Gaussian term of the GUE measure. Thus, for what follows, we consider instead of (\ref{eq1}) the following probability measure on $N\times N$ Hermitian matrices:
\begin{equation}\label{eq1New}
\frac{1}{\widetilde Z_N} \exp\left(-\Tr(H^2)\right) \dx H.
\end{equation}
Denote by $\lambda_k^m$ the $k$th smallest eigenvalue of the principal submatrix obtained from the first $m$ rows and columns of a GUE matrix. In our context, these principal submatrices are usually referred to as minors, and not (as otherwise customary) their determinants. The result is well known, see e.g.~\cite{Bar01,FN08b,Def08}: given the eigenvalues of the $N\times N$ matrix, the GUE minors' eigenvalues are uniformly distributed on the set
\begin{equation}\label{eq18}
{\cal D}^{(N)}=\{\lambda_k^m\in \R\, |\, \lambda_k^{m+1} \leq \lambda_k^m\leq \lambda_{k+1}^{m+1}, 1\leq k \leq m\leq N\}.
\end{equation}
It is proven in~\cite{JN06} that the correlation functions of these eigenvalues are determinantal with correlation kernel
\begin{multline}
K^{\rm GUE}(\xi_1,n_1;\xi_2,n_2)=-\frac{2}{2\pi\I}\int_{\I \R+\e}\dx w\, \frac{e^{-2(\xi_1-\xi_2)w}}{w^{n_2-n_1}}\Id_{[n_1<n_2]}\\
+\frac{2}{(2\pi\I)^2}\oint_{|z|=\e/2} \dx z \int_{\I \R+\e}\dx w \,\frac{e^{w^2-2w \xi_1}}{e^{z^2-2z \xi_2}}\frac{w^{n_1}}{z^{n_2}}\frac{1}{w-z},
\end{multline}
for any $\e>0$.
A way of proving is the following. Obviously, changing the condition $\lambda_k^{m+1} \leq \lambda_k^m$ into $\lambda_k^{m+1} < \lambda_k^m$ does not change the system, since we cut out null sets. Then, using Sasamoto's trick originally employed for TASEP~\cite{Sas05}, one can replace the interlacing condition by a product of determinants,
\begin{equation}
\prod_{m=1}^{N-1} \det[\phi(\lambda_i^{m},\lambda_j^{m+1})]_{1\leq i,j\leq m+1},
\end{equation}
where $\lambda_{m+1}^m\equiv{\rm virt}$ are \emph{virtual variables}, $\phi(x,y)=\Id_{[x\leq y]}$, $\phi({\rm virt},y)=1$. Thus, the measure on ${\cal D}^{(N)}$ becomes
\begin{equation}\label{eq22}
{\rm const} \times \left(\prod_{m=1}^{N-1} \det[\phi(\lambda_i^{m},\lambda_j^{m+1})]_{1\leq i,j\leq m+1}\right) \Delta(\lambda^N)\prod_{i=1}^N e^{-(\lambda_i^N)^2}\,\dx\lambda,
\end{equation}
where $\dx\lambda=\prod_{1\leq k\leq n\leq N}\dx\lambda_k^n$, and $\Delta$ is the Vandermonde determinant. Finally one simply applies Lemma~3.4 of~\cite{BFPS06}. A further approach is presented in~\cite{FN08b}.

\subsection*{Results}
\subsubsection*{Evolution of GUE minors}
Does there exist an extension of such a result for the minors of $\beta=2$ Dyson's Brownian motion? There are two aspects to be considered. The first is to determine whether the evolution of the minors' eigenvalues can be described by a Markov process. It is known that it is not the case if one takes at least three consecutive minors~\cite{ANvM10b}. However, along space-like paths the evolution is indeed Markovian (see Section~\ref{AppMarkov}). The second issue concerns the correlation functions and if they have any similarities with the ones for the $2+1$ particle system defined above. As we shall prove, the answer is affirmative if we restrict ourselves to space-like paths. This is the content of Theorem~\ref{ThmMainGUE} below.

To make the connection more straightforward, we replace the Ornstein-Uhlenbeck processes by Brownian motions starting from $0$. Note that the two models are the same after an appropriate change of scale in space-time. While preparing this manuscript the result analogue to Theorem~\ref{ThmMainGUE} below for the Ornstein-Uhlenbeck case was obtained by Adler, Nordenstam and van~Moerbeke~\cite{ANvM10}.

Let $H(t)$ be an $N\times N$ Hermitian matrix defined by
\begin{equation}
H_{i,j}(t)=
\begin{cases}
\frac{1}{\sqrt{2}}\, b_{i,i}(t), & \textrm{if }1\leq i \leq N, \\
\frac12(b_{i,j}(t)+\I \, \tilde b_{i,j}(t)), & \textrm{if }1\leq i < j \leq N, \\
\frac12(b_{i,j}(t)-\I \, \tilde b_{i,j}(t)), & \textrm{if }1\leq j < i \leq N,
\end{cases}
\end{equation}
where $b_{i,j}(t)$ and $\tilde b_{i,j}(t)$ are independent standard Brownian motions.
The measure on the $N\times N$ matrix at time $t$ is then given by
\begin{equation}
\frac{1}{\widetilde Z_{N,t}} \exp\left(-\frac{\Tr(H^2)}{t}\right) \dx H.
\end{equation}

For $n\in\{1,\dots,N\}$ we denote by $H(n,t)$ the $n\times n$ minor of $H(t)$, which is obtained by keeping the first $n$ rows and columns of $H(t)$. Denote by $\lambda^n_1(t)\leq \lambda_2^n(t)\leq \dots\leq \lambda_n^n(t)$ the eigenvalues of $H(n,t)$. Then, at any time $t$, the interlacing property (\ref{eq18}) holds. Moreover, along space-like paths the eigenvalues' process is Markovian with correlation functions given as follows.
\begin{thm}\label{ThmMainGUE}
For any $m=1,2,\dots$, pick $m$ (distinct) triples
\begin{equation}
\varkappa_j=(x_j,n_j,t_j)\in \R\times\N\times \R_{\geq 0}
\end{equation}
such that
\begin{equation}
t_1\leq t_2\leq \dots\leq t_m,\qquad n_1\geq n_2\geq \dots\geq n_m.
\end{equation}
Then, the $m$-point correlation function of the eigenvalues' point process is given by
\begin{equation}
\rho^{(m)}(\varkappa_1,\dots,\varkappa_m)=\det{[{\cal K}^{\rm GUE}(\varkappa_i,\varkappa_j)]}_{1\leq i,j\leq m},
\end{equation}
where
\begin{multline}\label{eqExtendedGUEkernel}
{\cal K}^{\rm GUE}(\varkappa_1;\varkappa_2) =-\frac{2}{2\pi\I}\int_{\I\R+\e}\dx w\, \frac{e^{(t_1-t_2)w^2-2(x_1-x_2)w}}{w^{n_2-n_1}}\Id_{[(n_1,t_1)\prec (n_2,t_2)]}\\
+ \frac{2}{(2\pi\I)^2}\oint_{|z|=\e/2}\dx z \int_{\I\R+\e}\dx w \, \frac{e^{w^2t_1-2x_1w}}{e^{z^2t_2-2x_2z}}
\frac{1}{w-z}\frac{w^{n_1}}{z^{n_2}}
\end{multline}
where $\e>0$.
\end{thm}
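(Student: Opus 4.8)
The plan is to write the joint law of the minors' eigenvalues restricted to a space-like path as an explicit \emph{product of determinants}, to recognise it as a determinantal (biorthogonal) ensemble via the Eynard--Mehta theorem in the form of Lemma~3.4 of~\cite{BFPS06} (the tool already used for the static case~(\ref{eq22})), and finally to evaluate the kernel by contour deformations and residues. So, given space-like data with $t_1\le\cdots\le t_m$ and $n_1\ge\cdots\ge n_m$, I would first refine the path to a staircase alternating \emph{vertical} moves $(n,t)\to(n-1,t)$ (pass to an $(n-1)\times(n-1)$ minor) and \emph{horizontal} moves $(n,t)\to(n,t')$, $t'>t$ (let the $n\times n$ minor evolve). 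Two classical facts supply the transition kernels for the corresponding array of eigenvalues: along a horizontal move the eigenvalues of a fixed minor perform $\beta=2$ Dyson's Brownian motion, with Karlin--McGregor transition density $\frac{\Delta(\lambda^n(t'))}{\Delta(\lambda^n(t))}\det[g_{t'-t}(\lambda^n_i(t),\lambda^n_j(t'))]_{i,j=1}^n$, $g$ the (suitably normalised) heat kernel; along a vertical move the smaller minor's eigenvalues, conditioned on the larger minor's, have the $\beta=2$ GUE--corners (Dixon--Anderson) density $\propto\frac{\Delta(\lambda^{n-1})}{\Delta(\lambda^n)}\Id_{[\lambda^{n-1}\prec\lambda^n]}$. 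The essential input -- this is established in Section~\ref{AppMarkov} -- is that along a space-like path these transitions really compose into a Markov chain: once one has zoomed into a minor it evolves by an independent matrix increment, its law depends on the past only through that minor's eigenvalues, and space-likeness guarantees one never zooms out again. Multiplying the transition kernels and the Gaussian density $\propto\Delta(\lambda^{n_1}(t_1))^2\prod_k e^{-(\lambda^{n_1}_k(t_1))^2/t_1}$ at the top vertex, all Vandermonde factors telescope and leave
\begin{equation*}
\mathrm{const}\cdot\Delta(\lambda^{n_1}(t_1))\,{\textstyle\prod_k}\,e^{-(\lambda^{n_1}_k(t_1))^2/t_1}\cdot\Delta(\lambda^{n_m}(t_m))\cdot\!\!\prod_{\mathrm{vertical}}\!\!\Id_{[\lambda^{n-1}(t)\prec\lambda^n(t)]}\cdot\!\!\prod_{\mathrm{horizontal}}\!\!\det[g_{t'-t}(\lambda^{n'}_i(t),\lambda^{n'}_j(t'))],
\end{equation*}
which for a single time is precisely~(\ref{eq22}).

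\textbf{Determinantal structure.} Sasamoto's trick, exactly as in the derivation of~(\ref{eq22}) (virtual variable $\lambda^{n-1}_n\equiv\virt$, $\phi(x,y)=\Id_{[x\le y]}$, $\phi(\virt,y)=1$), turns each interlacing indicator into $\det[\phi(\lambda^{n-1}_i,\lambda^n_j)]_{1\le i,j\le n}$, so the staircase measure takes the standard Eynard--Mehta form: functions $\Psi_k$ read off from $\Delta(\lambda^{n_1}(t_1))\times$(Gaussian), functions $\Phi_\ell$ from $\Delta(\lambda^{n_m}(t_m))$, vertical transition operators $\phi$, horizontal transition operators $g_{t'-t}$. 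Lemma~3.4 of~\cite{BFPS06} in its extended-kernel version, as used in~\cite{BFS07,BF08}, then yields that the array is a determinantal point process with kernel $\mathcal{K}(\varkappa_1;\varkappa_2)$ equal to minus the composition of the transition operators along the part of the staircase joining $\varkappa_2$ to $\varkappa_1$ -- nonzero precisely when $(n_1,t_1)\prec(n_2,t_2)$ -- plus $\sum_{k,\ell}\Psi_k(\varkappa_1)[M^{-1}]_{k,\ell}\Phi_\ell(\varkappa_2)$, where $\Psi_k(\varkappa_1)$ and $\Phi_\ell(\varkappa_2)$ denote the functions transported to $\varkappa_1$ and $\varkappa_2$, and $M$ is the Gram matrix pairing the $\Phi$'s against the transported $\Psi$'s. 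Restriction to the points $(x_j,n_j,t_j)$ then gives the asserted $m$-point function.

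\textbf{Evaluating the kernel (the main obstacle).} This is the technical core. In a contour representation both transports are tractable: composing with the interlacing kernel $\phi$ shifts the power of the spectral variable by one (this is the source of the factors $w^{n_1}$, $z^{-n_2}$, just as in the static GUE-minors computation of~\cite{JN06}), while composing with the heat kernel $g_s$ inserts a Gaussian factor $e^{sw^2}$ (resp.\ $e^{-sz^2}$), since $\int g_s(x,y)e^{-2wy}\,\dx y=e^{sw^2-2wx}$. Hence the forward-transported $\Psi$'s evaluated at $\varkappa_1$ carry the factor $e^{t_1w^2-2x_1w}w^{n_1}$ and the backward-transported $\Phi$'s at $\varkappa_2$ carry $e^{-t_2z^2+2x_2z}z^{-n_2}$ (the Gaussian weight being conjugated away in the usual way). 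The delicate step is to diagonalise $M$: I would guess the biorthogonal families directly from these contour integrals and verify $\langle\Phi_k,(\text{transport})\Psi_\ell\rangle=\delta_{k,\ell}$ by residue calculus, in parallel with~\cite{JN06} for the GUE minors and~\cite{BF08} for the $2+1$ particle system (Theorem~\ref{ThmDetStructure}); then $\sum_k\Psi_k(\varkappa_1)\Phi_k(\varkappa_2)$ is a geometric series that collapses to the factor $\frac{1}{w-z}$ and reproduces the second term of~(\ref{eqExtendedGUEkernel}), while the first (``propagator'') term collapses in the spectral variable to $\int_{\I\R+\e}\frac{e^{(t_1-t_2)w^2-2(x_1-x_2)w}}{w^{n_2-n_1}}\dx w$, present only when $(n_1,t_1)\prec(n_2,t_2)$. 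Finally one fixes the constant in $g$ (hence the $2$'s and signs) from the matrix-model variances; the degenerations $t_1=\cdots=t_m$ (which must return $K^{\mathrm{GUE}}$) and $n_1=\cdots=n_m$ (the extended Hermite kernel of $\beta=2$ Dyson's Brownian motion) serve as consistency checks, and comparison of~(\ref{eqExtendedGUEkernel}) with~(\ref{eq17}) identifies this kernel with the diffusion-scaling limit of the $2+1$ particle-system kernel of Theorem~\ref{ThmDetStructure}.
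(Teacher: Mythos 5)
Your proposal follows essentially the same route as the paper: decompose the space-like path into vertical (minor-removal, Sasamoto trick / Dixon--Anderson) and horizontal (Dyson's Brownian motion / HCIZ) elementary transitions, prove the Markov property, telescope the Vandermondes to reach the product-of-determinants (Eynard--Mehta) form, invoke the Borodin--Ferrari machinery for space-like determinantal measures, and then evaluate the kernel by contour representations where $\phi$ drops a power of the spectral variable, the heat kernel inserts a Gaussian factor, and the biorthogonal sum collapses to a geometric series giving $1/(w-z)$. This matches Lemmas~\ref{LemGUEFixedn}--\ref{LemGUEFixedt}, Proposition~\ref{PropMeasureGUE}, the convolution identities of Lemma~\ref{lem:1}, and the residue computations culminating in (\ref{eq54}) and (\ref{eq51}); the only cosmetic differences are that you cite Lemma~3.4 of~\cite{BFPS06} where the paper uses its extended-kernel form (Theorem~4.2 of~\cite{BF07}, reproduced as Theorem~\ref{ThmPushASEP}), and you leave the Hermite-polynomial choice of $\Psi^{n,t}_k$ that makes $M$ upper-triangular implicit rather than spelling it out.
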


\subsubsection*{Evolution of Wishart minors}
The appearence of determinantal correlation functions along space-like paths is not only limited to Brownian motion on GUE matrices, but they also occur in other Hermitian matrix models, namely the Laguerre Unitary Ensemble. We show that the evolution of Wishart matrices~\cite{TW03b} along space-like paths is determinantal and determine the space-time correlation kernel.

Let $A(n,t)$ be a $p\times n$ complex valued matrix defined by
\begin{equation}
A_{i,j}(n,t)=\frac{1}{\sqrt{2}}(b_{i,j}(t)+\I\, \tilde b_{i,j}(t)),\quad 1\leq i \leq p, 1\leq j \leq n
\end{equation}
where the $b_{i,j}$'s and $\tilde b_{i,j}$'s are independent standard Brownian motions. Then, we define the (complex) $n\times n$ sample covariance matrix (or Wishart matrix) by $H(n,t)=A(n,t)^*A(n,t)$, which is usually referred to as the Laguerre process. As before, denote by $\lambda_k^n(t)$ the $k$th smallest eigenvalue of $H(n,t)$.
\begin{thm}\label{ThmMainWishart}
For any $m=1,2,\dots$, pick $m$ (distinct) triples
\begin{equation}
\varkappa_j=(x_j,n_j,t_j)\in \R\times\{1,\dots,p\}\times \R_{\geq 0}
\end{equation}
such that
\begin{equation}
t_1\leq t_2\leq \dots\leq t_m,\qquad n_1\geq n_2\geq \dots\geq n_m.
\end{equation}
Then, the $m$-point correlation function of the eigenvalues' point process is given by
\begin{equation}
\rho^{(m)}(\varkappa_1,\dots,\varkappa_m)=\det{[{\cal K}^{\rm LUE}(\varkappa_i,\varkappa_j)]}_{1\leq i,j\leq m},
\end{equation}
where
\begin{multline}
{\cal K}^{\rm LUE}(\varkappa_1;\varkappa_2) =-\frac{1}{2\pi\I} \oint_{\Gamma_0}\dx z\, \frac{e^{x_1/(z-t_1)}}{e^{x_2/(z-t_2)}}\frac{(z-t_1)^{p-1-n_1}}{(z-t_2)^{p+1-n_2}}\Id_{[(n_1,t_1)\prec (n_2,t_2)]}\\
+ \frac{-1}{(2\pi\I)^2}\oint_{\Gamma_0}\dx z \oint_{\Gamma_{z,t_2}}\dx w \, \frac{e^{x_2/(z-t_1)}}{e^{x_2/(w-t_2)}}\frac{(z-t_1)^{p-1-n_1}}{(w-t_2)^{p+1-n_2}}\frac{w^p}{z^p}\frac{1}{w-z}.
\end{multline}
\end{thm}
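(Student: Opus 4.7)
The plan is to follow the same program used to prove Theorem~\ref{ThmMainGUE}, adapted to the Laguerre setting. First I would write the joint probability density of the full family of eigenvalues $\{\lambda_k^n(t_j)\}_{j,n,k}$ appearing along the chosen space-like path in a product-of-determinants form, and then read off the determinantal correlation kernel by an Eynard--Mehta / Borodin--Ferrari--Pr\"ahofer--Sasamoto type statement (in the form of Lemma~3.4 of~\cite{BFPS06}).

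Two ingredients go into this joint density. The static ingredient is the joint law of the eigenvalues of the minors $H(1,t),H(2,t),\dots,H(p,t)$ at a fixed time $t$: conditional on the eigenvalues of $H(p,t)$, the smaller minors are distributed on the Cauchy interlacing polytope $\{\lambda_k^{n+1}\leq \lambda_k^n\leq \lambda_{k+1}^{n+1}\}$, and the marginal on $H(p,t)$ is the complex Wishart law with weight $\prod_i (\lambda_i^p)^{p-n}\exp(-\lambda_i^p/t)$ and a Vandermonde squared. The dynamic ingredient is the transition kernel, at fixed $n$, of the non-colliding Laguerre process: the eigenvalues of $H(n,\cdot)$ evolve as a non-intersecting system whose transition density between times $t_1<t_2$ is given by a Karlin--McGregor determinant of one-dimensional squared-Bessel transition densities of index $p-n$. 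Combining these two pieces with Sasamoto's virtual-variable trick, exactly as in~(\ref{eq22}), turns each interlacing factor into a determinant $\det[\phi_n(\lambda_i^n,\lambda_j^{n+1})]$, and the joint density along the space-like path factorizes as a product of determinants of the type required by Lemma~3.4 of~\cite{BFPS06}.

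The main obstacle is computing the resulting kernel in the closed contour-integral form stated in the theorem. This requires choosing well-adapted biorthogonal families. The natural candidates are Laguerre polynomials written as contour integrals of the form $\tfrac{1}{2\pi\I}\oint e^{x/(z-t)}(z-t)^{p-1-n}\,\dx z$ around $z=t$, since these diagonalize both the Laguerre weight and the squared-Bessel transition operator. The steps are then: (i) identify the Gram matrix of the biorthogonal system and invert it; (ii) collapse the resulting finite double sum through a Cauchy-type identity to produce the $1/(w-z)$ factor; (iii) identify the indicator term $\Id_{[(n_1,t_1)\prec (n_2,t_2)]}$ as the contribution of the one-step Laguerre-process transition operator, exactly in analogy with the Brownian transition kernel appearing in~(\ref{eqExtendedGUEkernel}). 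The exponents $p-1-n_1$ and $p+1-n_2$ and the extra $w^p/z^p$ factor appearing in the theorem then arise naturally from the Laguerre weight and from the normalization of the Gram matrix; verifying that all prefactors match and that the chosen contours enclose the correct poles (in particular, $\Gamma_{z,t_2}$ rather than $\Gamma_0$ for the $w$-contour) is where the bulk of the routine but delicate computation lies.
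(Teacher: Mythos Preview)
Your plan is essentially the same as the paper's, and it would succeed. Two refinements are worth flagging. First, the determinantal-structure input you need here is not Lemma~3.4 of~\cite{BFPS06} (which handles a single-parameter interlacing array) but the more general Theorem~4.2 of~\cite{BF07}, reproduced in Appendix~\ref{AppCorrelation}, because the space-like path mixes time evolution and level changes; in particular the time transition ${\cal T}^{p-n}_{t,s}$ now depends on the level $n$ through the Bessel index, and Remark~\ref{RemarkTimeDepT} is exactly what authorizes this. Second, the reduction of the long convolutions defining the kernel hinges on the \emph{shifted} commutation relation $\phi * {\cal T}_{t,s}^{p-n} = {\cal T}_{t,s}^{p-(n-1)} * \phi$ (Lemma~\ref{lem:2}(iii)), where the Bessel index jumps by one when $\phi$ passes through ${\cal T}$; this is the Laguerre-specific subtlety that has no GUE analogue (there ${\cal T}$ and $\phi$ simply commute), and it is what makes the matrix $M$ upper triangular and the biorthogonal system computable. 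Once these two points are in place, your steps (i)--(iii) are exactly the paper's computation.
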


\subsubsection*{Acknowledgemnts}
This work was supported by the DFG (German Research Foundation) through the SFB (Collaborative Research Center) 611, project A12.

\section{Proof of Theorem~\ref{ThmMainGUE}}
The issue of the Markov property is discussed in Section~\ref{AppMarkov} below and therefore we assume it to hold in this section. For $0<t_1<t_2$, the joint distribution of $H_1=H(n,t_1)$ and $H_2=H(n,t_2)$ is given by
\begin{equation}\label{eq32}
{\rm const}\times \exp\left(-\frac{\Tr(H_1^2)}{t_1}\right) \exp\left(-\frac{\Tr((H_2-H_1)^2)}{t_2-t_1}\right)\dx H_1\,\dx H_2.
\end{equation}
The measure on eigenvalues is obtained using Eynard-Mehta formula~\cite{EM97} for coupled random matrices, which on its turn is based on the Harish-Chandra/Itzykson-Zuber formula~\cite{HC57,IZ80} (see Appendix~\ref{AppHCIZformulas}). It results in the following formula.
\begin{lem}\label{LemGUEFixedn}
Let $n$ be fixed. Denote by $\lambda_{k}^n(t)$, $1\leq k \leq n$, the eigenvalues of $H(n,t)$. Their joint distribution at $0<t_1<t_2$ is given by
\begin{multline}
{\rm const}\times \Delta(\lambda^n(t_1)) \det\left(e^{-(\lambda_{i}^n(t_1)-\lambda_{j}^n(t_2))^2/(t_2-t_1)}\right)_{1\leq i,j\leq n} \Delta(\lambda^n(t_2)) \\
\times\prod_{i=1}^Ne^{-(\lambda_{i}^n(t_1))^2 /t_1} \, \dx \lambda_{i}^n(t_1)\,\dx \lambda_{i}^n(t_2),
\end{multline}
with $\Delta$ the Vandermonde determinant and $\lambda^n(t)=(\lambda^n_1(t),\dots,\lambda_n^n(t))$.
\end{lem}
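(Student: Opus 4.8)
\textbf{Proof plan for Lemma~\ref{LemGUEFixedn}.}
The plan is to start from the joint matrix law (\ref{eq32}) of $H_1=H(n,t_1)$ and $H_2=H(n,t_2)$ and integrate out the unitary (angular) degrees of freedom, which is exactly the two-matrix instance of the Eynard--Mehta reduction~\cite{EM97} and relies on the Harish-Chandra/Itzykson-Zuber integral recalled in Appendix~\ref{AppHCIZformulas}. First I would diagonalize, writing $H_a=U_a\,\diag(\lambda^n(t_a))\,U_a^*$ for $a=1,2$, with $U_a\in U(n)$ and, say, $\lambda^n_1(t_a)\le\dots\le\lambda^n_n(t_a)$; by the Weyl integration formula $\dx H_a={\rm const}\times\Delta(\lambda^n(t_a))^2\,\dx\lambda^n(t_a)\,\dx U_a$, where $\dx U_a$ is normalized Haar measure. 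Then I would expand the coupling term as $\Tr((H_2-H_1)^2)=\Tr(H_1^2)+\Tr(H_2^2)-2\Tr(H_1H_2)$, noting that $\Tr(H_1^2)=\sum_i(\lambda^n_i(t_1))^2$ and $\Tr(H_2^2)=\sum_i(\lambda^n_i(t_2))^2$ are angle-independent, while $\Tr(H_1H_2)=\Tr\bigl(U\,\diag(\lambda^n(t_1))\,U^*\,\diag(\lambda^n(t_2))\bigr)$ depends only on the relative unitary $U=U_1^*U_2$, whose law is again Haar.

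Next I would carry out the $U$-integration by HCIZ,
\[
\int_{U(n)}\exp\!\left(\tfrac{2}{t_2-t_1}\Tr\bigl(U\,\diag(a)\,U^*\,\diag(b)\bigr)\right)\dx U
={\rm const}\times\frac{\det\bigl(e^{\frac{2}{t_2-t_1}a_ib_j}\bigr)_{1\le i,j\le n}}{\Delta(a)\,\Delta(b)},
\]
with $a=\lambda^n(t_1)$, $b=\lambda^n(t_2)$. The two Vandermonde factors in the denominator cancel one power of each Vandermonde coming from the two Jacobians, leaving exactly $\Delta(\lambda^n(t_1))\,\Delta(\lambda^n(t_2))$. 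It then remains to reassemble the Gaussian weights: $e^{-\Tr(H_1^2)/t_1}$ gives $\prod_i e^{-(\lambda^n_i(t_1))^2/t_1}$, and the diagonal pieces of $e^{-\Tr((H_2-H_1)^2)/(t_2-t_1)}$, namely $e^{-(\lambda^n_i(t_1))^2/(t_2-t_1)}$ and $e^{-(\lambda^n_j(t_2))^2/(t_2-t_1)}$, can be pulled into row $i$ and column $j$ of the HCIZ determinant; using $e^{-(a_i^2+b_j^2-2a_ib_j)/(t_2-t_1)}=e^{-(a_i-b_j)^2/(t_2-t_1)}$ this produces precisely $\det\bigl(e^{-(\lambda^n_i(t_1)-\lambda^n_j(t_2))^2/(t_2-t_1)}\bigr)_{1\le i,j\le n}$. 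Collecting the pieces gives the claimed density.

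I expect no serious obstacle here: the bookkeeping of the (dimension- and time-dependent) normalization constants is harmless, since the end result is a probability density and is therefore pinned down by normalization, and the Weyl-integration-formula reduction is standard. The two points that need a little care are, first, checking the hypotheses under which the HCIZ formula applies in this form (both matrices are genuine $n\times n$ Hermitian matrices, so this is fine), and second, the passage from matrices to eigenvalues: the determinants occurring are antisymmetric under permutations of the $\lambda^n_i$'s, so one must consistently work in the Weyl chamber $\lambda^n_1\le\dots\le\lambda^n_n$ (or symmetrize) throughout. This ordered, determinantal two-time, single-level law is exactly the input to be fed, together with the interlacing structure (\ref{eq18}) rewritten via Sasamoto's trick as in (\ref{eq22}), into the (extended) Eynard--Mehta theorem in the remaining steps of the proof of Theorem~\ref{ThmMainGUE}.
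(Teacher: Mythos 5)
Your proposal is correct and matches the paper's own (very terse) treatment: the paper simply invokes the Eynard--Mehta reduction of the coupled two-matrix measure (\ref{eq32}) via the Harish-Chandra/Itzykson-Zuber formula (\ref{eqIZHC1}), and the details are spelled out in the Markov-property discussion of Section~\ref{AppMarkov}, where one diagonalizes, replaces $U(n,t_1),U(n,t_2)$ by the relative unitary using Haar invariance, and integrates it out by HCIZ. Your expansion of $\Tr((H_2-H_1)^2)$, the cancellation of one Vandermonde power from each Jacobian against the HCIZ denominator, and the absorption of the diagonal Gaussian pieces into rows and columns of $\det(e^{2a_ib_j/(t_2-t_1)})$ to complete the square are exactly the bookkeeping the paper leaves implicit.
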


The second formula concerns the joint distribution of the eigenvalues at two different levels. This result is a special case of the formula (\ref{eq22}) discussed above. (It is enough to reintegrate out the lower levels, which gives a Vandermonde determinant).
\begin{lem}\label{LemGUEFixedt}
Let $t$ be fixed. Denote by $\lambda_{k}^n(t)$, $1\leq k \leq n$, the eigenvalues of $H(n,t)$. Their joint distribution at levels $n$ and $n+1$ is given by
\begin{multline}
{\rm const}\times \Delta(\lambda^{n}(t))
\det[\phi(\lambda_i^{n}(t),\lambda_j^{n+1}(t))]_{1\leq i,j\leq n+1}
\Delta(\lambda^{n+1}(t)) \\
\times\prod_{i=1}^{n+1}e^{-(\lambda_{i}^{n+1}(t))^2/t} \, \dx \lambda_{i}^n(t)\,\dx \lambda_{i}^{n+1}(t),
\end{multline}
where $\lambda_{n+1}^n\equiv{\rm virt}$ are \emph{virtual variables}, $\phi(x,y)=\Id_{[x\leq y]}$, $\phi({\rm virt},y)=1$ (and $\Delta$ the Vandermonde determinant).
\end{lem}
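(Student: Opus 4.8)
The plan is to realize Lemma~\ref{LemGUEFixedt} as a two-level marginal of the full ``GUE minor'' measure~(\ref{eq22}). First a reduction: $\lambda^n(t)$ and $\lambda^{n+1}(t)$ depend only on the top-left $(n+1)\times(n+1)$ corner of $H(t)$, and since $\exp(-\Tr(H^2)/t)$ factorizes over the matrix entries, that corner is itself distributed as an $(n+1)\times(n+1)$ matrix with the time-$t$ measure. So I may take $N=n+1$. For such a matrix the conditional law of all minor eigenvalues given the top ones is uniform on ${\cal D}^{(n+1)}$, and running the steps that produce~(\ref{eq22}) (Sasamoto's trick, turning the interlacing into a product of determinants of $\phi$'s) with the Gaussian weight rescaled to $\exp(-(\lambda_i^{n+1})^2/t)$ gives the joint density of $\{\lambda_k^m(t)\}_{1\leq k\leq m\leq n+1}$ as
\[
{\rm const}\times\Big(\prod_{m=1}^{n}\det[\phi(\lambda_i^m,\lambda_j^{m+1})]_{1\leq i,j\leq m+1}\Big)\,\Delta(\lambda^{n+1})\prod_{i=1}^{n+1}e^{-(\lambda_i^{n+1})^2/t}\prod_{1\leq k\leq m\leq n+1}\dx\lambda_k^m .
\]
Only the factors with $m\leq n-1$ involve $\lambda^1,\dots,\lambda^{n-1}$, so integrating those variables out reduces the lemma to the identity
\[
\int\Big(\prod_{m=1}^{n-1}\det[\phi(\lambda_i^m,\lambda_j^{m+1})]_{1\leq i,j\leq m+1}\Big)\prod_{1\leq k\leq m\leq n-1}\dx\lambda_k^m \;=\; c_n\,\Delta(\lambda^n),\qquad c_n\neq 0 ;
\]
substituting this back leaves exactly the claimed expression, the surviving determinant $\det[\phi(\lambda_i^n,\lambda_j^{n+1})]_{1\leq i,j\leq n+1}$ being left alone.

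The heart of the proof is this reintegration identity. Since eigenvalues within a level are automatically ordered and $\phi({\rm virt},y)=1$, subtracting the all-ones bottom row from the others and expanding shows $\det[\phi(\lambda_i^m,\lambda_j^{m+1})]_{1\leq i,j\leq m+1}$ equals $(-1)^m$ times the indicator of the interlacing $\lambda_k^{m+1}\leq\lambda_k^m\leq\lambda_{k+1}^{m+1}$, $1\leq k\leq m$ --- this is Sasamoto's trick. Multiplying over $m=1,\dots,n-1$ yields, up to the sign $(-1)^{\binom{n}{2}}$, the indicator that $(\lambda_k^l)_{1\leq k\leq l\leq n}\in{\cal D}^{(n)}$, so the left-hand side above is $\pm$ the Lebesgue volume of the slice of ${\cal D}^{(n)}$ with $\lambda^n$ fixed, i.e.\ the volume of the Gelfand--Tsetlin polytope with top row $\lambda^n$. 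This volume is the classical quantity $\Delta(\lambda^n)/\prod_{k=1}^{n-1}k!$: the quickest route is to quote Lemma~3.4 of~\cite{BFPS06} (or the analogous reintegration step in~\cite{BF08}), but it can also be proved directly by induction on $n$, peeling off one row at a time, the inductive step being the Dixon--Anderson type integral $\int_{\{\lambda_k^{m+1}\leq\lambda_k^m\leq\lambda_{k+1}^{m+1}\}}\Delta(\lambda^m)\prod_{k=1}^m\dx\lambda_k^m=c'_m\,\Delta(\lambda^{m+1})$.

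I do not anticipate a real obstacle here --- conceptually it is just the marginalization of an explicit measure. The only thing needing care is bookkeeping: the signs coming from the cofactor expansions, the check that the virtual-variable convention makes the relevant matrix rows constant (so that $\det[\phi]$ really is a sign times an interlacing indicator), and control of the $n$-dependent normalization. Since all of these are independent of the eigenvalues, they are absorbed into the ``${\rm const}$'' of the statement, and the polished argument is short.
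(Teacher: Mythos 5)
Your proof is correct and follows essentially the same route as the paper, which disposes of the lemma in a single sentence as a special case of~(\ref{eq22}): take the $(n+1)\times(n+1)$ corner, write the joint density of all minor eigenvalues as the interlacing-indicator product times the top-level Gaussian weight, and integrate out levels $1,\dots,n-1$ to reproduce the Vandermonde $\Delta(\lambda^n)$. Your expansion of ``reintegrate out the lower levels, which gives a Vandermonde determinant'' into the explicit Gelfand--Tsetlin volume computation (equivalently Lemma~3.4 of~\cite{BFPS06}) is exactly the intended content, so nothing more needs to be said.
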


The eigenvalues' process is a Markov process (see Section~\ref{AppMarkov} for details) for both fixed matrix dimension $n$ and increasing time $t$, as well as for fixed time $t$ and decreasing matrix dimension $n$. The combination of the formulas in Lemma~\ref{LemGUEFixedn} and Lemma~\ref{LemGUEFixedt} leads to Proposition~\ref{PropMeasureGUE}:
\begin{prop}\label{PropMeasureGUE}
Let $N_1\geq \dots \geq N_m= 1$ be integers and $0< t_1< \dots < t_m$ be reals. We denote by $\lambda_1^n(t)<\dots<\lambda_n^n(t)$ the eigenvalues of $H(n,t)$ and set $N_0=N_1$, $N_{m+1}=0$. Then the joint density of
\begin{equation}
\{\lambda_k^n(t_j) : 1\leq j\leq m, N_j\leq n \leq N_{j-1}, 1\leq k \leq n\}
\end{equation}
is given by
\begin{multline}\label{eqGUE}
{\rm const} \times  \det\bigl[ \Psi_{N_1-\ell}^{N_1,t_1}(\lambda_k^{N_1}(t_1))\bigr]_{1\leq k,\ell \leq N_1} \\
\times \prod_{j=1}^{m-1} \biggl[ \det \bigl[ {\cal T}_{t_{j+1},t_j}(\lambda_k^{N_j}(t_{j+1}),\lambda_\ell^{N_j}(t_j))\bigr]_{1\leq k,\ell \leq N_j} \\
\times \prod_{n=N_{j+1}+1}^{N_j} \det \bigl[ \phi(\lambda^{n-1}_k(t_{j+1}),\lambda^n_\ell(t_{j+1}))\bigr]_{1\leq k,\ell \leq n}\biggr],
\end{multline}
where
\begin{equation}\label{eq33}
\begin{aligned}
\phi(x,y)&=\Id_{[x\leq y]}, \quad \phi(x_n^{n-1},y)=1, \\
{\cal T}_{t,s}(x,y)&= \frac{1}{\sqrt{\pi(t-s)}} \exp\left(-\frac{(x-y)^2}{t-s}\right)\Id_{[t\geq s]}, \\
\Psi_{k}^{N_1,t_1}(x)&= \frac{1}{t_1^{k/2}}\,p_k\left(\frac{x}{\sqrt{t_1}}\right)\frac{1}{\sqrt{\pi t_1}}\,\exp\left(-\frac{x^2}{t_1}\right),
\end{aligned}
\end{equation}
for $k=0,\dots,N_1-1$. Here $p_k$ is the standard Hermite polynomial of degree $k$ (see Appendix~\ref{AppHermite} for details).
\end{prop}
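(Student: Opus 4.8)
The plan is to assemble the joint density as a product of one-step transition densities along a single space-like path, using Lemma~\ref{LemGUEFixedn} for the increments in time and Lemma~\ref{LemGUEFixedt} (equivalently~(\ref{eq22})) for the decrements in level, and then to observe that the resulting chain of Vandermonde determinants telescopes. First I would fix the path that starts at $(N_1,t_1)$ and alternates: up in time at the fixed level $N_1$ to $(N_1,t_2)$; then down in level at the fixed time $t_2$ through $(N_1-1,t_2),\dots,(N_2,t_2)$; then up in time to $(N_2,t_3)$; then down in level to $(N_3,t_3)$; and so on, finishing at $(N_m,t_m)=(1,t_m)$. Since $t_1\leq\dots\leq t_m$ and $N_1\geq\dots\geq N_m$, this is indeed a space-like path and it passes through exactly the vertices $(n,t_j)$ with $N_j\le n\le N_{j-1}$ occurring in the statement. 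By the Markov property of the eigenvalue process along space-like paths (Section~\ref{AppMarkov}, assumed here), the density of $\{\lambda_k^n(t_j)\}$ equals the one-point marginal of $\lambda^{N_1}(t_1)$ times the product of the one-step conditional densities over the edges of this path. For each level $n$ and time $t$ the one-point marginal of $\lambda^n(t)$ is the joint eigenvalue density of the $n\times n$ GUE, $\mathrm{const}\times\Delta(\lambda^n(t))^2\prod_i e^{-(\lambda^n_i(t))^2/t}$, since the $n$-minor carries the measure $\propto e^{-\Tr(H^2)/t}\,\dx H$ by independence of the matrix entries; the same density can also be recovered by integrating the second variable out of Lemma~\ref{LemGUEFixedn} (an Andr\'eief identity) or out of Lemma~\ref{LemGUEFixedt} (a Dixon--Anderson interlacing integral).

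Next I would record the two transition densities. For a time edge $(n,s)\to(n,t)$ with $s<t$, dividing the joint law in Lemma~\ref{LemGUEFixedn} by the marginal of $\lambda^n(s)$ gives, up to a constant, $\Delta(\lambda^n(s))^{-1}\det[\mathcal{T}_{t,s}(\lambda^n_k(t),\lambda^n_\ell(s))]_{1\le k,\ell\le n}\,\Delta(\lambda^n(t))$, where I use $e^{-(x-y)^2/(t-s)}=\sqrt{\pi(t-s)}\,\mathcal{T}_{t,s}(x,y)$ together with the symmetry of $\mathcal{T}_{t,s}$ and $\det A=\det A^{\mathrm{T}}$ to recognise the Gaussian determinant of Lemma~\ref{LemGUEFixedn} as the $\mathcal{T}$-determinant of~(\ref{eq33}). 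For a level edge $(n,t)\to(n-1,t)$, dividing the joint law in Lemma~\ref{LemGUEFixedt} (with its ``$n+1$'' read as $n$) by the marginal of $\lambda^n(t)$ gives, up to a constant, $\Delta(\lambda^n(t))^{-1}\,\Delta(\lambda^{n-1}(t))\,\det[\phi(\lambda^{n-1}_k(t),\lambda^n_\ell(t))]_{1\le k,\ell\le n}$, with the virtual-variable and $\phi$ conventions of~(\ref{eq33}).

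Then I would multiply the initial marginal by all of these conditionals in path order and telescope the Vandermondes. Along the level descent at time $t_{j+1}$ the product $\prod_{n=N_{j+1}+1}^{N_j}\Delta(\lambda^{n-1}(t_{j+1}))/\Delta(\lambda^n(t_{j+1}))$ collapses to $\Delta(\lambda^{N_{j+1}}(t_{j+1}))/\Delta(\lambda^{N_j}(t_{j+1}))$; its denominator cancels the numerator $\Delta(\lambda^{N_j}(t_{j+1}))$ produced by the preceding time edge, and its numerator cancels the denominator $\Delta(\lambda^{N_{j+1}}(t_{j+1}))$ of the next time edge. Iterating, the only Vandermonde that is never cancelled is one copy of $\Delta(\lambda^{N_1}(t_1))$ from the initial $\Delta(\lambda^{N_1}(t_1))^2$ (the first time edge, the one out of $(N_1,t_1)$, consumes the other copy), while the terminal factor is $\Delta(\lambda^1(t_m))=1$. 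The surviving product $\Delta(\lambda^{N_1}(t_1))\prod_i e^{-(\lambda^{N_1}_i(t_1))^2/t_1}$ equals, up to a constant, $\det[\Psi^{N_1,t_1}_{N_1-\ell}(\lambda^{N_1}_k(t_1))]_{1\le k,\ell\le N_1}$, by the standard identity writing a weighted Vandermonde as a determinant of rescaled Hermite polynomials times the weight (factor the weight out of each row and reduce the polynomial determinant by column operations to a Vandermonde), with $\Psi$ as in~(\ref{eq33}). Reading the remaining $\det[\mathcal{T}]$ and $\det[\phi]$ factors off in the order dictated by the path then gives exactly~(\ref{eqGUE}).

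The main obstacle I expect is the bookkeeping at the ``corners'' of the path: one has to check that at each corner $(N_j,t_j)$ the one-point marginal used to form the conditional density really is the $N_j\times N_j$ GUE eigenvalue density — this is the consistency of the Gaussian ensembles under restriction to minors, entering implicitly whenever Lemma~\ref{LemGUEFixedn} or Lemma~\ref{LemGUEFixedt} is divided by a marginal — and that the powers of the corner Vandermondes line up so that precisely one factor survives the telescoping and it is the correct one. Everything upstream, namely that the eigenvalue process genuinely is Markovian along space-like paths, is where the real work lies, but that is established separately in Section~\ref{AppMarkov} and is taken as given here.
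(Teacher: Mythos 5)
Your proposal is correct and is essentially the paper's intended argument: the paper proves the proposition simply by asserting that the Markov property along space-like paths (Section~\ref{AppMarkov}) allows one to combine Lemma~\ref{LemGUEFixedn} (time steps) and Lemma~\ref{LemGUEFixedt} (level steps), and your telescoping of Vandermondes and conversion of the surviving $\Delta(\lambda^{N_1}(t_1))\prod_i e^{-(\lambda_i^{N_1}(t_1))^2/t_1}$ into $\det[\Psi_{N_1-\ell}^{N_1,t_1}]$ is exactly the bookkeeping that the paper leaves implicit.
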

We could have chosen any polynomials of degree $k$ multiplied by the Gaussian weight without changing the probability measure (\ref{eqGUE}) since the modifications would just affect the normalization constant. However, this choice allows a huge simplification of the computations, because of the properties of Lemma~\ref{lem:1} below.

To determine the kernel, we first slightly rewrite (\ref{eqGUE}). For $1\leq n \leq N_1$ let $c(n)=\#\{i: N_i=n\}$, and we denote the consecutive times for such a level by $t_1^n < \dots < t^n_{c(n)}$. Then, the measure (\ref{eqGUE}) can be rewritten as
\begin{multline}\label{eq36}
{\rm const} \times \prod_{n=2}^{N_1} \Biggl( \det\bigr[\phi(\lambda_k^{n-1}(t_1^{n-1}),\lambda_\ell^n(t^n_{c(n)}))\bigr]_{1\leq k,\ell\leq n} \\
 \times \prod_{a=2}^{c(n)} \det\bigl[ {\cal T}_{t_a^n,t_{a-1}^n} (\lambda_k^n(t_a^n),\lambda_\ell^n(t^n_{a-1}))\bigr]_{1\leq k,\ell\leq n}\Biggr) \det[\Psi_{N_1-\ell}^{N_1,t_1^{N_1}}(\lambda_k^{N_1}(t_1^{N_1}))\bigr]_{1\leq k,\ell\leq N_1}.
\end{multline}
It is known that a measure of this form has determinantal correlations and the correlation kernel is computed by means of Theorem~4.2 of~\cite{BF07}, which we report in Appendix~\ref{AppCorrelation} for the reader.

For any given $k\in \Z$ we set
\begin{equation}\label{eqPsiDefinGUE}
\Psi^{n,t}_k(x)=\frac{2^{k+1}}{t^{(k+1)/2}}\frac{1}{2\pi\I} \int_{\I\R+\e} \dx w \, e^{w^2-2wx/\sqrt{t}} w^k,\quad \e>0.
\end{equation}
For $n=N_1$, $t=t_1$ and $k=0,\dots,N_1-1$, this function is the one in the measure (\ref{eqGUE}), which is obtained from the first representation of Hermite polynomials in (\ref{eqReprHermite}).
\begin{lem} \label{lem:1}
It holds, for $0<r<s<t$ and $k\geq 1$,
\begin{itemize}
\item[(i)] $\phi \ast \Psi^{n,t}_{n-k} = \Psi^{n-1,t}_{n-1-k}$,
\item[(ii)] ${\cal T}_{t,s} \ast \Psi^{n,s}_{n-k} =\Psi^{n,t}_{n-k}$,
\item[(iii)] $\phi \ast {\cal T}_{t,s} = {\cal T}_{t,s} \ast \phi$,
\item[(iv)] ${\cal T}_{t,s} \ast {\cal T}_{s,u}={\cal T}_{t,u}$.
\end{itemize}
\end{lem}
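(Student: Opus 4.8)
The plan is to verify all four identities by direct residue/contour computation, using the integral representation \eqref{eqPsiDefinGUE} of $\Psi^{n,t}_k$ together with the explicit formulas for $\phi$ and ${\cal T}_{t,s}$ in \eqref{eq33}. Throughout, $\ast$ denotes the convolution $(f\ast g)(x,z)=\int_\R f(x,y)g(y,z)\,\dx y$ in the appropriate variable, and I will freely deform the contour $\I\R+\e$ as long as no poles are crossed (the integrand in $w$ is entire, so the only constraint comes from convergence of the Gaussian integrals, which forces $\Re(w)>0$ at the relevant step).

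For part~(i): I compute $(\phi\ast\Psi^{n,t}_{n-k})(x)=\int_x^{\infty}\Psi^{n,t}_{n-k}(y)\,\dx y$. Substituting \eqref{eqPsiDefinGUE} and exchanging the $y$-integral with the $w$-integral (legitimate since $\Re(w)=\e>0$ makes $\int_x^\infty e^{-2wy/\sqrt t}\,\dx y$ absolutely convergent), the $y$-integration produces a factor $\frac{\sqrt t}{2w}e^{-2wx/\sqrt t}$. This lowers the power of $w$ from $n-k$ to $n-k-1=(n-1)-k$ and adjusts the prefactor $\frac{2^{n-k+1}}{t^{(n-k+1)/2}}$ to $\frac{2^{n-k}}{t^{(n-k)/2}}$, which is exactly the prefactor of $\Psi^{n-1,t}_{n-1-k}$. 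One must check the boundary term at $y=+\infty$ vanishes, which it does because $\Re(w)>0$; the constraint $k\ge 1$ guarantees $n-k\le n-1$ so no spurious pole at $w=0$ appears after the division. Part~(iv) is the standard semigroup property of the heat kernel: ${\cal T}_{t,s}$ is (up to the indicator) the Gaussian transition density with variance $\propto (t-s)$, so the convolution of variances adds, giving ${\cal T}_{t,u}$; the indicators compose correctly since $t\ge s\ge u$. Part~(iii) follows from translation invariance of ${\cal T}_{t,s}(x,y)$ (it depends only on $x-y$): both $\phi\ast{\cal T}_{t,s}$ and ${\cal T}_{t,s}\ast\phi$ equal $\int \Id_{[x\le y]}\,{\cal T}_{t,s}(y-z)\,\dx y=\int_x^\infty {\cal T}_{t,s}(y-z)\,\dx y$ after the obvious change of variable, so they coincide.

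Part~(ii) is the one I expect to cost the most bookkeeping. Here I write ${\cal T}_{t,s}(x,y)=\frac{1}{\sqrt{\pi(t-s)}}e^{-(x-y)^2/(t-s)}$ and insert the $w$-representation of $\Psi^{n,s}_{n-k}$; the inner integral over $y\in\R$ is then a Gaussian integral of the form $\int_\R e^{-(x-y)^2/(t-s)}e^{-2wy/\sqrt s}\,\dx y$. Completing the square in $y$ and using the standard Gaussian integral gives a factor proportional to $e^{-2wx/\sqrt s}\,e^{(t-s)w^2/s}$ together with $\sqrt{\pi(t-s)}$, which cancels the prefactor of ${\cal T}$. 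The resulting exponent is $w^2+\frac{(t-s)w^2}{s}-\frac{2wx}{\sqrt s}=\frac{t}{s}w^2-\frac{2wx}{\sqrt s}$, so after the substitution $w\mapsto w\sqrt{s/t}$ (a rescaling of the contour $\I\R+\e$, again crossing no poles) this becomes precisely $w^2-2wx/\sqrt t$, and the prefactors and the power $w^{n-k}$ transform into those of $\Psi^{n,t}_{n-k}$. The care needed is purely in tracking the powers of $2$, $s$, and $t$ through the rescaling and in justifying the interchange of integrals (absolute convergence holds because $\Re(w)>0$ and the Gaussian in $y$ decays). Once (i), (ii) and (iv) are in place, (iii) is immediate, and the lemma follows.
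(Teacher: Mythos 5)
Your proof is correct and follows essentially the same route as the paper's: a direct verification of each identity using the contour representation \eqref{eqPsiDefinGUE}, with (i) reducing to a boundary-term-free $y$-integration, (ii) to a Gaussian integral plus the rescaling $w\mapsto w\sqrt{s/t}$ (the paper simply performs the substitution before the Gaussian integral rather than after), (iii) to translation invariance of the heat kernel, and (iv) to the semigroup property. One small inaccuracy worth noting: in (i) the hypothesis $k\geq 1$ is not needed to avoid any ``pole at $w=0$'' --- the contour $\I\R+\e$ is an open vertical line that never meets or encircles the origin, so the computation is valid for every integer index; the restriction $k\geq1$ is only there so that the resulting label $n-1-k$ stays in the natural range used later.
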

\begin{proof}
For the first relation, we use $\Re(w)=\e>0$ so that we can exchange the two integrals,
\begin{equation}
\begin{aligned}
(\phi \ast \Psi^{n,t}_k)(x)&=\int_x^\infty\dx y\, \frac{2^{k+1}}{t^{(k+1)/2}}\frac{1}{2\pi\I} \int_{\I\R+\e} \dx w \,e^{w^2-2wy/t^{1/2}} w^k\\
&=\frac{2^{k+1}}{t^{(k+1)/2}}\frac{1}{2\pi\I} \int_{\I\R+\e} \dx w \,e^{w^2} w^k \int_x^\infty\dx y \,e^{-2wy/t^{1/2}}\\
&=\frac{2^k}{t^{k/2}}\frac{1}{2\pi\I} \int_{\I\R+\e} \dx w \,e^{w^2-2wx/t^{1/2}} w^{k-1} =\Psi^{n-1,t}_{k-1}(x).
\end{aligned}
\end{equation}

For the second identity, we first do the change of variable $w=z (s/t)^{1/2}$ in the integral representation (\ref{eqPsiDefinGUE}) of $\Psi_k^{n,s}$ and then perform a Gaussian integration:
\begin{equation}
\begin{aligned}
({\cal T}_{t,s} \ast \Psi^{n,s}_k)(x) &= \frac{2^{k+1}}{t^{(k+1)/2}}\frac{1}{2\pi\I}\int_{\I\R+\e}\dx z \, e^{z^2 s/t}z^k \int_\R\dx y\, \frac{\exp\left(-\frac{(x-y)^2}{t-s}-\frac{2yz}{\sqrt{t}}\right)}{\sqrt{\pi(t-s)}}\\
&=\frac{2^{k+1}}{t^{(k+1)/2}}\frac{1}{2\pi\I}\int_{\I\R+\e}\dx z \, e^{z^2} e^{-2xz/\sqrt{t}} z^k =\Psi^{n,t}_k(x).
\end{aligned}
\end{equation}

The third relation is also easy to verify. Indeed,
\begin{equation}
\begin{aligned}
&(\phi \ast {\cal T}_{t,s})(x,z)=\int_\R \dx y \, \phi(x,y){\cal T}_{t,s}(y,z)=\int_{\R_+}\dx y \, {\cal T}_{t,s}(y+x,z)\\
&=\int_\R \dx y \, {\cal T}_{t,s}(x,z-y) \phi(z-y,z)=\int_\R \dx y \, {\cal T}_{t,s}(x,y) \phi(y,z) = ({\cal T}_{t,s} \ast \phi)(x,z).
\end{aligned}
\end{equation}

The last relation is the standard heat kernel semigroup identity.
\end{proof}

By Theorem~\ref{ThmPushASEP} and Remark~\ref{RemarkKernel}, there is a simple way of getting the kernel if the matrix $M$ with
\begin{equation}
  M_{k,\ell}  = (\phi \ast {\cal T}^k \ast \cdots \ast \phi \ast {\cal T}^{N_1} \ast \Psi_{N_1-\ell}^{N_1,t_1^{N_1}})(x_k^{k-1}),
\end{equation}
is upper triangular, where ${\cal T}^n := {\cal T}_{t_{c(n)}^n,t_1^n}$. The identities in Lemma~\ref{lem:1} give, for $k\geq \ell$,
\begin{equation}
 M_{k,\ell} = (\phi \ast \Psi^{k,t_{c(k)}^k}_{k-\ell})(x_k^{k-1}) = \int_{\R} \dx x \, \Psi^{k,t_{c(k)}^k}_{k-\ell}(x)
\begin{cases}
= 0,& \textrm{ for }\ell<k,\\
\neq 0, &\textrm{ for }\ell=k,
\end{cases}
\end{equation}
because the last expression is (after a rescaling in $x$) proportional to the orthogonal relation (\ref{eq57}) for $n=0$ and $m=k-\ell$.

Next we need to determine the polynomials $\Phi^{n,t}_\ell(x)$, $\ell=0,\dots,n-1$, which are biorthogonal to the functions $\Psi^{n,t}_k(x)$, $k=0,\dots,n-1$, i.e., polynomials satisfying
\begin{equation}\label{eq41}
\int_{\R}\dx x\, \Psi^{n,t}_k(x)\Phi^{n,t}_\ell(x) =\delta_{k,\ell},\quad 1\leq k,\ell\leq n-1.
\end{equation}

\begin{lem}
The functions
\begin{equation}\label{eqPhiDefinGUE}
\Phi^{n,t}_\ell(x)=\frac{1}{\ell!} \frac{t^{\ell/2}}{2^\ell}p_\ell\Big(\frac{x}{\sqrt{t}}\Big) =\frac{t^{\ell/2}}{2^\ell}\frac{1}{2\pi\I}\oint_{\Gamma_0}\dx z \,\frac{e^{-z^2+2zx/t^{1/2}}}{z^{\ell+1}}
\end{equation}
satisfy the relation (\ref{eq41}).
\end{lem}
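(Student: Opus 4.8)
The plan is to reduce the biorthogonality relation~(\ref{eq41}) to the classical orthogonality of the Hermite polynomials with respect to the Gaussian weight, recorded in Appendix~\ref{AppHermite} as~(\ref{eq57}). Along the way one also gets the equality of the two expressions for $\Phi^{n,t}_\ell$ for free.

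First I would put both families into the form ``polynomial $\times$ weight''. For the $\Psi$'s this is essentially already recorded: by the first integral representation of the Hermite polynomials in~(\ref{eqReprHermite}) --- equivalently, by completing the square $w^2-2wx/\sqrt t=(w-x/\sqrt t)^2-x^2/t$ in~(\ref{eqPsiDefinGUE}), deforming the $w$-contour to $\I\R$ and carrying out the Gaussian integral, which is exactly the manipulation identifying~(\ref{eqPsiDefinGUE}) with the function appearing in~(\ref{eq33}) --- one has, for every integer $k\geq 0$,
\[
\Psi^{n,t}_k(x)=\frac{1}{t^{k/2}}\,p_k\!\Big(\frac{x}{\sqrt t}\Big)\,\frac{1}{\sqrt{\pi t}}\,e^{-x^2/t}.
\]
For the $\Phi$'s, the equality of the two expressions in~(\ref{eqPhiDefinGUE}) is precisely the second (residue) contour representation of $p_\ell$ in~(\ref{eqReprHermite}), so it suffices to work with the polynomial form $\Phi^{n,t}_\ell(x)=\frac{1}{\ell!}\frac{t^{\ell/2}}{2^\ell}\,p_\ell(x/\sqrt t)$.

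Next I would substitute these two expressions into the left-hand side of~(\ref{eq41}) and change variables $u=x/\sqrt t$, $\dx x=\sqrt t\,\dx u$. All powers of $t$ collect into a single prefactor $t^{(\ell-k)/2}$, and what remains of the integral is the Hermite inner product:
\[
\int_{\R}\dx x\,\Psi^{n,t}_k(x)\,\Phi^{n,t}_\ell(x)=\frac{t^{(\ell-k)/2}}{\ell!\,2^\ell}\,\frac{1}{\sqrt\pi}\int_\R p_k(u)\,p_\ell(u)\,e^{-u^2}\,\dx u.
\]
By the orthogonality relation~(\ref{eq57}) the integral vanishes for $k\neq\ell$, and for $k=\ell$ it equals $\sqrt\pi\,2^\ell\,\ell!$, so the prefactor cancels it and the value is $1$. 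Hence the right-hand side is $\delta_{k,\ell}$, which is~(\ref{eq41}).

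I do not expect a genuine obstacle here; the one point to watch is the normalization bookkeeping. The identity as stated pins down the ``standard Hermite polynomials'' $p_k$ to be the physicists' ones (leading coefficient $2^k$ and $\int_\R p_k(u)^2 e^{-u^2}\,\dx u=2^k k!\sqrt\pi$), and one must check that this is the convention underlying both the integral representations~(\ref{eqReprHermite}) used above and the weight in~(\ref{eq33}); tracking the constant $2^{k+1}/t^{(k+1)/2}$ in~(\ref{eqPsiDefinGUE}) through the Gaussian integral confirms the consistency. An alternative would be to evaluate $\int_\R\Psi^{n,t}_k\Phi^{n,t}_\ell\,\dx x$ directly from the two contour representations, doing the $x$-integral first; but that forces one to justify interchanging a non-absolutely-convergent real integral with the $w$-contour and then to read off a residue at $z=0$, so passing through~(\ref{eq57}) is the cleaner route.
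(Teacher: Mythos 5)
Your proof is correct and takes essentially the same route as the paper: the paper's proof is exactly the change of variables $x\mapsto x\sqrt t$ followed by an appeal to the Hermite orthogonality relation~(\ref{eq57}), which is precisely what you do after identifying $\Psi^{n,t}_k$ and $\Phi^{n,t}_\ell$ in ``polynomial $\times$ weight'' form. Your normalization bookkeeping (the factor $t^{(\ell-k)/2}/(\ell!\,2^\ell\sqrt\pi)$ cancelling against $\sqrt\pi\,2^\ell\ell!$ when $k=\ell$) checks out.
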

\begin{proof}
One does the change of variable $x\mapsto x\sqrt{t}$ and then uses the orthogonal relation (\ref{eq57}).
\end{proof}

Let us compute the last term in (\ref{eq60}). To simplify the notations, we set $t_1=t_{a_1}^{n_1}$ and $t_2=t_{a_2}^{n_2}$. First, we do the changes of variables $w=\sqrt{t_1} \tilde w$ and $z=\sqrt{t_2}\tilde z$ in (\ref{eqPsiDefinGUE}) and (\ref{eqPhiDefinGUE}). We obtain
\begin{equation}
\begin{aligned}
\sum_{k=1}^{n_2} \Psi^{n_1,t_1}_{n_1-k}(x_1) \Phi^{n_2,t_2}_{n_2-k}(x_2) & =
\sum_{k=1}^{n_2} \frac{2^{n_1}}{2^{n_2}}\frac{2}{(2\pi\I)^2}\int_{\I\R+\e}\dx \tilde w \oint_{\Gamma_0}\dx \tilde z \frac{e^{\tilde w^2 t_1-2\tilde w x_1}}{e^{\tilde z^2 t_2-2\tilde z x_2}}\frac{\tilde w^{n_1-k}}{\tilde z^{n_2+1-k}}
\end{aligned}
\end{equation}
Now, we take the integral over $\tilde z$ to satisfy $|\tilde z|<|\tilde w|$, say $|\tilde z|=\e/2$. This allows us to take the sum inside and extend it to $+\infty$ (because for $k>n_2$ the pole at zero for $\tilde z$ vanishes). The sum over $k$ gives
\begin{equation}
\sum_{k\geq 1} \frac{\tilde z^{k-1}}{\tilde w^k} = \frac{1}{\tilde w-\tilde z}
\end{equation}
so that we obtain
\begin{multline} \label{eq54}
\sum_{k=1}^{n_2} \Psi^{n_1,t_1}_{n_1-k}(x_1) \Phi^{n_2,t_2}_{n_2-k}(x_2) \\
= \frac{2^{n_1}}{2^{n_2}}\frac{2}{(2\pi\I)^2}\int_{\I\R+\e}\dx \tilde w \oint_{|z|=\e/2}\dx \tilde z \, \frac{e^{\tilde w^2 t_1-2\tilde w x_1}}{e^{\tilde z^2 t_2-2\tilde z x_2}}\frac{\tilde w^{n_1}}{\tilde z^{n_2}}\frac{1}{\tilde w-\tilde z}.
\end{multline}

The last term we have to compute is $\phi^{(t_{a_1}^{n_1},t_{a_2}^{n_2})}$, see Theorem~\ref{ThmDetStructure}. To simplify the notations, we set $\phi^{(t_{a_1}^{n_1},t_{a_2}^{n_2})}(x,y)=\phi^{(n_1,t_1;n_2,t_2)}(x,y)$. We have
\begin{equation}
\phi^{(n_1,t_1;n_2,t_2)} =
\begin{cases}
\phi^{\ast (n_2-n_1)} \ast {\cal T}_{t_2,t_1}, & \text{ if } (n_1,t_1)\prec (n_2,t_2), \\ 0, & \text{ otherwise}.
\end{cases}
\end{equation}
It is easy to verify that $\phi(x,y)$ has the integral representation
\begin{equation}
\phi(x,y)=\frac{2}{2\pi \I}\int_{\I\R+\e}\dx w \, \frac{e^{2w(y-x)}}{2w},\quad \e>0.
\end{equation}
and similarly,
\begin{equation}
\phi^{\ast n}(x,y)=\frac{2}{2\pi \I}\int_{\I\R+\e}\dx w\, \frac{e^{2w(y-x)}}{(2w)^n},\quad \e>0.
\end{equation}
Then, for $(n_1,t_1)\prec (n_2,t_2)$, a Gaussian integration gives us
\begin{equation}\label{eq51}
\phi^{(n_1,t_1;n_2,t_2)}(x_1,x_2)=\frac{2^{n_1}}{2^{n_2}}\frac{2}{2\pi\I}\int_{\I\R+\e}\dx w\, \frac{e^{w^2(t_1-t_2)-2w(x_1-x_2)}}{w^{n_2-n_1}}.
\end{equation}

Equations (\ref{eq54}) and (\ref{eq51}) yield a kernel which is, up to the conjugation factor\footnote{A determinantal point process is defined by its correlation kernel, which is defined up to conjugations.} $2^{n_1-n_2}$, the same as (\ref{eqExtendedGUEkernel}). Thus the proof of Theorem~\ref{ThmMainGUE} is completed.

\section{Proof of Theorem~\ref{ThmMainWishart}}
As for the GUE case, the issue of the Markov property is discussed in Section~\ref{AppMarkov} below. For $0<t_1<t_2$, the joint distribution of $A_1=A(n,t_1)$ and $A_2=A(n,t_2)$ is given by
\begin{equation}\label{eq32b}
{\rm const}\times \exp\left(-\frac{\Tr(A_1^*A_1)}{t_1}\right) \exp\left(-\frac{\Tr((A_2^*-A_1^*)(A_2-A_1))}{t_2-t_1}\right)\dx A_1\,\dx A_2.
\end{equation}
The measure on eigenvalues is obtained (as in the Ornstein-Uhlenbeck case studied in~\cite{TW03b}) by the Harish-Chandra/Itzykson-Zuber formula for rectangular matrices~\cite{JSV96,ZJZ03} (see Appendix~\ref{AppHCIZformulas}). It results in the following formula.

\begin{lem}\label{LemLUEFixedn}
 Let $n$ be fixed. Denote by $\lambda_k^n(t)$, $1\leq k \leq n\leq p$, the eigenvalues of $H(n,t)=A(n,t)^*A(n,t)$. Their joint distribution at $0<t_1<t_2$ is given by
\begin{multline}
{\rm const} \\
\times \det\left[I_{p-n}\left(\tfrac{2 \sqrt{\lambda^n_i(t_1)\lambda^n_j(t_2)}}{t_2-t_1}\right) \left(\tfrac{\lambda_j^n(t_2)}{\lambda_i^n(t_1)}\right)^{(p-n)/2} e^{-(\lambda_i^n(t_1)+\lambda_j^n(t_2))/(t_2-t_1)}\right]_{1\leq i,j\leq n} \\
\times \Delta(\lambda^n(t_1)) \Delta(\lambda^n(t_2)) \prod_{i=1}^n (\lambda_i(t_1))^{p-n} e^{-\lambda_i^n(t_1)/t_1}\, \dx\lambda_i^n(t_1)\,\dx\lambda_i^n(t_2),
\end{multline}
where $I_m$ is the modified Bessel function of order $m$, see (\ref{eq91}).
\end{lem}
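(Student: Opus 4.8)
The plan is to mirror, step by step, the derivation already carried out for Lemma~\ref{LemGUEFixedn}, replacing the Hermitian Brownian increments by the rectangular complex Gaussian increments and the Harish-Chandra/Itzykson-Zuber formula for Hermitian matrices by its rectangular analogue. Starting from the joint density (\ref{eq32b}) of $(A_1,A_2)$, I would pass to the singular-value / eigenvalue variables via the polar decomposition $A=U\Sigma V^*$ with $U\in U(p)$, $V\in U(n)$ and $\Sigma$ containing the singular values $\sqrt{\lambda_1^n},\dots,\sqrt{\lambda_n^n}$. The Jacobian of this change of variables produces the factor $\prod_i(\lambda_i)^{p-n}$ together with a Vandermonde-type factor in the $\lambda_i$'s (this is the standard LUE/Laguerre volume element, and I would simply quote it from the literature on Wishart matrices, e.g.~\cite{TW03b}). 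The unitary invariance of each Gaussian factor in (\ref{eq32b}) under $A_j\mapsto U_jA_jV_j^*$ then lets me integrate out the "reference-frame" unitaries independently for $A_1$ and for the pair coupling $A_1$ to $A_2$.

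The crucial input is the rectangular HCIZ integral: for two $p\times n$ matrices with singular values packaged as above, an integral of the form $\int_{U(p)\times U(n)} e^{-\Tr((A_2^*-A_1^*)(A_2-A_1))/(t_2-t_1)}\,dU\,dV$, after expanding the cross term, reduces to an integral of $e^{2\Re\Tr(V\Sigma_1 U^* U\Sigma_2 V^*)/(t_2-t_1)}$-type over the unitary groups, which by the Jinzenji–Sasakura–Vafa / Zinn-Justin–Zuber formula~\cite{JSV96,ZJZ03} evaluates to a determinant whose $(i,j)$ entry is (up to the Vandermonde factors that cancel against the Jacobian Vandermondes) a Bessel function $I_{p-n}\!\big(2\sqrt{\lambda_i^n(t_1)\lambda_j^n(t_2)}/(t_2-t_1)\big)$. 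Collecting the exponential prefactors $e^{-\Tr(A_1^*A_1)/(t_1)}=\prod_i e^{-\lambda_i^n(t_1)/t_1}$ and $e^{-(\lambda_i^n(t_1)+\lambda_j^n(t_2))/(t_2-t_1)}$ coming from the diagonal parts of the quadratic forms, and the power $(\lambda_j^n(t_2)/\lambda_i^n(t_1))^{(p-n)/2}$ that the rectangular HCIZ formula carries as its own prefactor, one lands exactly on the claimed expression. The remaining Vandermonde factors $\Delta(\lambda^n(t_1))\Delta(\lambda^n(t_2))$ in the statement are precisely the leftover Jacobian contributions not absorbed into the HCIZ determinant.

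The main obstacle is getting the rectangular HCIZ formula stated and applied in exactly the normalization needed, including a careful bookkeeping of which Vandermonde powers and which $(\lambda_i/\lambda_j)^{(p-n)/2}$ prefactors belong to the Jacobian versus to the group integral, since the rectangular version is less standardized in the literature than the square Hermitian case~\cite{HC57,IZ80}. I expect to quote the precise form of the rectangular HCIZ integral from Appendix~\ref{AppHCIZformulas} and then the proof becomes a matter of matching prefactors. A secondary point is ensuring the ordering $0<t_1<t_2$ is used correctly so that $t_2-t_1>0$ and the Bessel arguments are real and positive; no analytic subtlety arises there. Once Lemma~\ref{LemLUEFixedn} is in hand, the subsequent steps parallel the GUE proof: combine it with the fixed-time interlacing formula (the Laguerre analogue of Lemma~\ref{LemGUEFixedt}), assemble the measure along a space-like path into a product of determinants as in Proposition~\ref{PropMeasureGUE}, identify the biorthogonal functions $\Psi,\Phi$ in terms of Laguerre polynomials via contour integrals, verify the triangularity of the analogue of the matrix $M$, and read off ${\cal K}^{\rm LUE}$ from Theorem~4.2 of~\cite{BF07}.
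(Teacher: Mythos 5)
Your proposal is correct and follows essentially the same route the paper takes: pass to the singular-value decomposition, use the Jacobian to produce the $\prod_i\lambda_i^{p-n}\Delta^2$ volume element, integrate the relative unitaries via the rectangular Harish--Chandra/Itzykson--Zuber formula (\ref{eqIZHC2}) to get the Bessel-function determinant, and match the Vandermonde and power prefactors. One small correction: \cite{JSV96} is Jackson--\c{S}ener--Verbaarschot, not Jinzenji--Sasakura--Vafa.
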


The second formula concerns the joint distributions of the eigenvalues at two different levels. This is studied in~\cite{FN08b} with the following result.
\begin{lem}\label{LemLUEFixedt}
 Let $t$ be fixed. Denote by $\lambda_k^n(t)$, $1\leq k\leq n<p$, the eigenvalues of $H(n,t)$. Their joint distribution at levels $n$ and $n+1$ is given by
\begin{multline}
  {\rm const} \times \Delta(\lambda^n(t)) \det\bigl[\phi(\lambda_i^n(t),\lambda_j^n(t))\bigr]_{1\leq i,j\leq n+1} \Delta(\lambda^{n+1}(t)) \\
\times \prod_{i=1}^{n+1} (\lambda_i^{n+1}(t))^{p-(n+1)} e^{-\lambda_i^{n+1}(t)/t}\,\dx \lambda_i^n(t) \, \dx \lambda_i^{n+1}(t),
\end{multline}
where $\lambda_{n+1}^n \equiv \virt$ are virtual variables, $\phi(x,y)=\Id_{[x\geq y]}$ and $\phi(\virt,y)=1$.
\end{lem}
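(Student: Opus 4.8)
The plan is to combine the nested structure of the Wishart minors with the unitary invariance of the complex Gaussian ensemble, and to close the argument with Sasamoto's trick exactly as in the GUE minor computation recalled after~(\ref{eq18}). Write $A_1=A(n,t)$ and $A_2=A(n+1,t)$. By construction $A_2=[\,A_1\mid a\,]$, where $a$, the $(n+1)$-th column of $A_2$, is a $p$-dimensional vector of independent complex Gaussians independent of $A_1$, and the pair $(A_1,a)$ has density proportional to $\exp(-\Tr(A_2^*A_2)/t)$. Since $H(n,t)=A_1^*A_1$ is precisely the top-left $n\times n$ principal submatrix of $H(n+1,t)=A_2^*A_2$, the lemma is a statement about the spectrum of $H(n+1,t)$ and the spectrum of its $n\times n$ corner, and I would reduce it to two inputs: the law of $H(n+1,t)$, and the conditional law of its corner spectrum.

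For the first input, since $n+1\le p$ the matrix $H(n+1,t)=A_2^*A_2$ is a nondegenerate complex Wishart (Laguerre Unitary Ensemble) matrix: in its spectral decomposition $H(n+1,t)=U\,\diag(\lambda^{n+1})\,U^*$, the eigenvector matrix $U$ is Haar-distributed on the unitary group $U(n+1)$ and independent of the spectrum (a direct consequence of the invariance $A_2\mapsto A_2V$, $V\in U(n+1)$), while
\[
\lambda^{n+1}=(\lambda^{n+1}_1\le\dots\le\lambda^{n+1}_{n+1})\quad\text{has density}\quad\propto\ \Delta(\lambda^{n+1})^2\prod_{i=1}^{n+1}(\lambda_i^{n+1})^{p-n-1}e^{-\lambda_i^{n+1}/t}.
\]
This is the standard Laguerre weight, obtained from the Jacobian of the singular value decomposition, or --- staying within the tools of the paper --- from the rectangular Harish-Chandra/Itzykson-Zuber formula already used for Lemma~\ref{LemLUEFixedn}.

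The second input is a conditional, $p$-independent computation, which I expect to be the main obstacle: given $\lambda^{n+1}$, the eigenvalues $\lambda^n=(\lambda^n_1\le\dots\le\lambda^n_n)$ of the $n\times n$ corner of $U\,\diag(\lambda^{n+1})\,U^*$, with $U$ Haar on $U(n+1)$, have conditional density
\[
p(\lambda^n\mid\lambda^{n+1})\ \propto\ \frac{\Delta(\lambda^n)}{\Delta(\lambda^{n+1})}\;\Id_{[\lambda^{n+1}_1\le\lambda^n_1\le\lambda^{n+1}_2\le\dots\le\lambda^n_n\le\lambda^{n+1}_{n+1}]}.
\]
I would obtain this either by quoting the known ``corners'' description of $\beta=2$ unitarily invariant ensembles --- the general-spectrum analogue of Baryshnikov's GUE minor result, see~\cite{Bar01} --- or, to keep the proof self-contained, by a direct computation: the joint Laplace transform of the corner spectrum is a Harish-Chandra/Itzykson-Zuber type integral over $U(n+1)$, and integrating out the angular variables reduces it to a Dixon--Anderson/Gelfand--Tsetlin integral whose value is exactly the Vandermonde ratio above on the interlacing polytope. (One could instead use the rank-one update $A_2A_2^*=A_1A_1^*+aa^*$ of $p\times p$ matrices and the secular-equation description of rank-one perturbations, but then a confluent limit is needed to handle the eigenvalue $0$ of multiplicity $p-n$; working with the $n\times n$ corner of the $(n+1)\times(n+1)$ matrix avoids any degeneracy since $n<p$.)

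It then remains to assemble the two densities. Their product makes the factor $\Delta(\lambda^{n+1})$ in the denominator cancel one Vandermonde of the Laguerre weight, so the joint density of $(\lambda^n,\lambda^{n+1})$ is proportional to $\Delta(\lambda^n)\,\Delta(\lambda^{n+1})\,\Id_{[\text{interlacing}]}\prod_{i=1}^{n+1}(\lambda_i^{n+1})^{p-n-1}e^{-\lambda_i^{n+1}/t}$. To match the statement, I would finally rewrite the interlacing indicator by Sasamoto's trick, as recalled after~(\ref{eq18}): with the ordering convention $\lambda^n_1\le\dots\le\lambda^n_n$ and the Cauchy interlacing $\lambda^{n+1}_k\le\lambda^n_k\le\lambda^{n+1}_{k+1}$, the indicator equals, up to an overall sign absorbed into $\mathrm{const}$, the determinant $\det[\phi(\lambda^n_i,\lambda^{n+1}_j)]_{1\le i,j\le n+1}$ with $\phi(x,y)=\Id_{[x\ge y]}$, the virtual variable $\lambda^n_{n+1}\equiv\virt$, and $\phi(\virt,y)=1$. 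Re-attaching the Lebesgue differentials and absorbing the numerical constants into $\mathrm{const}$ yields the claimed formula; the only genuinely nontrivial input is the corner-spectrum density of the previous paragraph, the rest being the standard Laguerre law of $H(n+1,t)$, the cancellation of one Vandermonde, and the determinantal rewriting of the interlacing constraint.
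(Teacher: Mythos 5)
Your proposal is correct, and in substance it is the route the paper itself relies on: the paper does not prove Lemma~\ref{LemLUEFixedt} in situ but quotes \cite{FN08b}, and in Section~\ref{AppMarkov} it justifies the same statement by exactly your two inputs, namely unitary invariance of $H(n+1,t)=A_2^*A_2$ (Haar eigenvectors independent of the spectrum, cf.\ \cite{AGZ10}) together with the Baryshnikov/Defosseux corner measure \cite{Bar01,Def08}, i.e.\ the conditional density $n!\,\Delta(\lambda^n)/\Delta(\lambda^{n+1})$ on the interlacing polytope, which you then multiply by the LUE law of $\lambda^{n+1}$ obtained from the singular value decomposition (or from the rectangular HCIZ formula as in Lemma~\ref{LemLUEFixedn}). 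Two small points in your favor: the normalizing constant of the corner density is $n!$, independent of $\lambda^{n+1}$, which is what allows everything to be absorbed into a single overall constant; and with the stated conventions ($\phi(x,y)=\Id_{[x\geq y]}$, $\lambda_{n+1}^n\equiv\virt$, $\phi(\virt,y)=1$) the determinant equals the interlacing indicator exactly on the ordered sector, so no sign bookkeeping is actually needed — you also implicitly correct the paper's typo, since the determinant in the statement should read $\det[\phi(\lambda_i^n(t),\lambda_j^{n+1}(t))]$. Your self-contained derivation is thus a fleshed-out version of what the paper delegates to references, with no gap.
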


Putting together the formulas in lemmata~\ref{LemLUEFixedn} and~\ref{LemLUEFixedt} leads to the next proposition.

\begin{prop}\label{PropMeasureLUE}
Let $p\geq N_1\geq \dots \geq N_m= 1$ be integers and \mbox{$0< t_1< \dots< t_m$} be real numbers. We denote by $\lambda_1^n(t)<\dots<\lambda_n^n(t)$ the eigenvalues of $H(n,t)$ and set $N_0=N_1$, $N_{m+1}=0$. Then the joint density of
\begin{equation}
 \left\{ \lambda_k^n(t_j) : 1\leq j\leq m, N_j\leq n \leq N_{j-1}, 1\leq k \leq n\right\}
\end{equation}
is given by
\begin{multline}\label{eq58}
 {\rm const} \times  \det\bigl[ \Psi_{N_1-\ell}^{p-N_1,t_1}\bigl(\lambda_k^{N_1}(t_1)\bigr)\bigr]_{1\leq k,\ell \leq N_1} \\
 \times \prod_{j=1}^{m-1} \biggl[ \det \bigl[ {\cal T}_{t_{j+1},t_j}^{p-N_j}\bigl(\lambda_k^{N_j}(t_{j+1}),\lambda_\ell^{N_j}(t_j)\bigr)\bigr]_{1\leq k,\ell \leq n_j} \\
 \times \prod_{\ell=N_{j+1}+1}^{N_j} \det \bigl[ \phi\bigl(\lambda^{n-1}_k(t_{j+1}),\lambda^{n}_\ell(t_{j+1})\bigr)\bigr]_{1\leq k,\ell \leq n}\biggr],
\end{multline}
where
\begin{equation}\label{eq61}
\begin{aligned}
\phi(x,y)&=\Id_{[x\geq y]}\quad \textrm{and}\quad \phi(\lambda_{n+1}^n,y)=1,\\
{\cal T}_{t,s}^n (x,y) &= \left(\frac{x}{y}\right)^{n/2} I_n\left( \frac{2\sqrt{xy}}{t-s}\right)  \frac{1}{t-s} \exp\left(-\frac{x+y}{t-s}\right) \Id_{[x,y>0]}\Id_{[s\leq t]}, \\
\Psi_k^{p-N_1,t_1} (x) & = \frac{k!}{(p-N_1+k)!t_1^{k+1}} \, \left(\frac{x}{t_1}\right)^{p-N_1} \exp\left(-\frac{x}{t_1}\right) L_k^{p-N_1} \left(\frac{x}{t_1}\right) \Id_{[x>0]},
\end{aligned}
\end{equation}
for $k=0,\dots, N_1-1$. Here $L^n_k$ are the generalized Laguerre polynomials of order $n$ and degree $k$, see Appendix~\ref{AppLaguerre}.
\end{prop}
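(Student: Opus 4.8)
The plan is to derive Proposition~\ref{PropMeasureLUE} by iterating the two-matrix formulas of Lemma~\ref{LemLUEFixedn} and Lemma~\ref{LemLUEFixedt}, exactly in parallel to how Proposition~\ref{PropMeasureGUE} is obtained from Lemma~\ref{LemGUEFixedn} and Lemma~\ref{LemGUEFixedt}. First I would fix the chain of triples $(\lambda_k^n(t_j))$ with $N_1\geq\dots\geq N_m=1$ and $0<t_1<\dots<t_m$, and write the joint density as a product over the ``steps'' of the space-like path: each time the time coordinate increases at fixed level $n=N_j$ one inserts the transition factor from Lemma~\ref{LemLUEFixedn}, namely $\det[I_{p-n}(2\sqrt{\lambda_i^n(t_j)\lambda_\ell^n(t_{j+1})}/(t_{j+1}-t_j))(\cdots)]$, and each time the level drops from $n$ to $n-1$ at fixed time one inserts the interlacing factor $\det[\phi(\lambda_i^{n-1}(t),\lambda_j^n(t))]$ from Lemma~\ref{LemLUEFixedt}. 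The Markov property along space-like paths (Section~\ref{AppMarkov}) is what licenses multiplying these one-step kernels; the Vandermonde factors at intermediate levels/times telescope, and reintegrating the lower levels where needed just produces Vandermonde determinants, as the excerpt already notes after Lemma~\ref{LemGUEFixedt}.

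The second block of work is bookkeeping to match the explicit expressions. The transition kernel ${\cal T}_{t_{j+1},t_j}^{p-N_j}$ in (\ref{eq61}) is read off directly from Lemma~\ref{LemLUEFixedn} by pulling the factors $(\lambda_j^n(t_2)/\lambda_i^n(t_1))^{(p-n)/2}$ out of the determinant — the ratio structure lets one absorb the $(\lambda_i^n(t_1))^{p-n}$ weight at the left endpoint and leave $(\lambda_\ell^{N_1}(t_1))^{p-N_1}$-type weights, which get folded into $\Psi_k^{p-N_1,t_1}$. For the leftmost boundary term one must exhibit $\Delta(\lambda^{N_1}(t_1))\prod_i(\lambda_i^{N_1}(t_1))^{p-N_1}e^{-\lambda_i^{N_1}(t_1)/t_1}$ as $\det[\Psi_{N_1-\ell}^{p-N_1,t_1}(\lambda_k^{N_1}(t_1))]$: since the $\Psi_k^{p-N_1,t_1}$ differ from the weight only by the generalized Laguerre polynomial $L_k^{p-N_1}$ of degree $k$, the determinant equals the Vandermonde times the weight up to a constant (triangular row operations on the polynomial degrees), which is why, as with the Hermite case, any family of degree-$k$ polynomials times the weight would do and only the normalization changes. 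The Bessel-function identity $I_m(z)=\sum_{j\geq 0}(z/2)^{m+2j}/(j!\,(m+j)!)$, i.e.\ (\ref{eq91}), is used to keep the $p$-dependence and the $\Id_{[x,y>0]}$ support conditions consistent throughout.

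I would then assemble the product into the stated form (\ref{eq58}), grouping the factors along the space-like path exactly as in the passage from (\ref{eqGUE}) to (\ref{eq36}): a product over levels $n=2,\dots,N_1$ of one interlacing determinant $\det[\phi(\lambda_k^{n-1},\lambda_\ell^n)]$ (connecting level $n$ at its last visited time to level $n-1$ at its first visited time) and a product of transition determinants $\det[{\cal T}_{t_a^n,t_{a-1}^n}^{p-n}(\cdots)]$ over the consecutive times $t_1^n<\dots<t_{c(n)}^n$ at which level $n$ is visited, capped by the boundary determinant $\det[\Psi_{N_1-\ell}^{p-N_1,t_1^{N_1}}(\lambda_k^{N_1}(t_1^{N_1}))]$. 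Collecting the overall $\lambda$-independent prefactors into ${\rm const}$ finishes the proof.

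The main obstacle I expect is the careful tracking of the power-weights $(\lambda)^{p-n}$ as the level $n$ changes along the path: each drop $n\to n-1$ changes the Laguerre/Bessel order by one, and one has to check that the weight freed up at each step is precisely what is needed to convert the next two-matrix formula from Lemma~\ref{LemLUEFixedn} into the clean kernel ${\cal T}_{t,s}^{p-N_j}$ with the intended order, and that the leftover boundary weight is exactly $(\lambda^{N_1}(t_1))^{p-N_1}e^{-\lambda^{N_1}(t_1)/t_1}$ matching $\Psi^{p-N_1,t_1}$. This is the Laguerre analogue of the harmless rescalings $w=\sqrt{t_1}\tilde w$ in the GUE proof, but here the shifts in Bessel order make the accounting genuinely more delicate; everything else is a routine transcription of the GUE argument with Hermite replaced by Laguerre and the Gaussian heat kernel replaced by the Bessel transition kernel.
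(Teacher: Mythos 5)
Your proposal is correct and takes the same route the paper intends: the authors' own ``proof'' is the single sentence that combining Lemmata~\ref{LemLUEFixedn} and~\ref{LemLUEFixedt} yields the proposition, and your Markov-chaining argument with telescoping Vandermondes and tracking of the $(\lambda)^{p-n}$ weights and Bessel orders supplies exactly the content that sentence leaves implicit. The one obstacle you flag---the shift in Laguerre/Bessel order as the level drops---does work out automatically, because integrating out one interlacing layer raises the power weight by exactly one, which is the Wishart analogue of the Vandermonde reintegration noted after Lemma~\ref{LemGUEFixedt}.
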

Comparing the mathematical structure of (\ref{eqGUE}) and (\ref{eq58}), we see that the only difference is that the transition kernel for time depends also on the level. However, this does not pose any problem, see Remark~\ref{RemarkTimeDepT}.

For $k \in \mathbb Z$ and $x\in\R_+$ we set
\begin{equation}\label{eq62}
\Psi_k^{n,t}(x)=\frac{t^{-(k+1)}}{2\pi\I} \oint_{\Gamma_0}\dx z\,\frac{(z-1)^k}{z^{n+k+1}} e^{x(z-1)/t}.
\end{equation}
For $n=p-N_1$, $t=t_1$ and $k=0,\dots,N_1-1$ the above defined function coincides with (\ref{eq61}). Moreover, the prefactors are chosen such that the following nice recursion relations hold.

\newpage
\begin{lem} \label{lem:2}
It holds, for $t>s>r>0$, $n \leq p$, and $k\geq 1$
\begin{itemize}
\item[(i)] $\phi \ast \Psi_{n-k}^{p-n,t} = \Psi_{(n-1)-k}^{p-(n-1),t}$,
\item[(ii)] ${\cal T}^{p-n}_{t,s} \ast \Psi_{n-k}^{p-n,s} = \Psi_{n-k}^{p-n,t}$,
\item[(iii)] $\mathcal \phi \ast {\cal T}_{t,s}^{p-n} = {\cal T}_{t,s}^{p-(n-1)}\ast \phi$,
\item[(iv)] ${\cal T}^{p-n}_{t,s} \ast {\cal T}^{p-n}_{s,r}={\cal T}^{p-n}_{t,r}$.
\end{itemize}
\end{lem}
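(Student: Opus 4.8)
The plan is to prove the four identities in Lemma~\ref{lem:2} by direct computation using the contour-integral representation (\ref{eq62}) together with the integral representations of $\phi$ and of the Bessel-type kernel ${\cal T}^{n}_{t,s}$, mimicking the structure of the proof of Lemma~\ref{lem:1} but keeping track of the extra level-dependence. The key preliminary step is to write all objects as contour integrals: for (\ref{eq62}) we already have the representation, and for the Laguerre transition kernel ${\cal T}^{n}_{t,s}$ one uses the standard integral representation of the modified Bessel function $I_n(2\sqrt{xy}/(t-s))$, namely $I_n(2\sqrt{xy}/(t-s)) = \frac{1}{2\pi\I}\oint_{\Gamma_0}\dx w\, w^{-n-1}\exp\bigl(w + xy/((t-s)^2 w)\bigr)$ (up to the appropriate rescaling), which turns ${\cal T}^{p-n}_{t,s}(x,y)$ into a single contour integral in an auxiliary variable; the prefactor $\frac{1}{t-s}(x/y)^{n/2}e^{-(x+y)/(t-s)}$ combines cleanly with this. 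Similarly $\phi(x,y)=\Id_{[x\geq y]}$ has a one-dimensional integral representation analogous to the one used in the GUE case (now with the opposite orientation of the indicator), and its $n$-fold convolution $\phi^{\ast n}$ carries an extra factor $(2w)^{-n}$ or its analogue.

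For identity (i), $\phi\ast\Psi^{p-n,t}_{n-k}=\Psi^{p-(n-1),t}_{(n-1)-k}$, I would insert (\ref{eq62}), exchange the $\dx y$-integral (over $y\leq x$, or the complementary range depending on the sign convention) with the $\dx z$-contour integral, justified by the exponential decay coming from $\Re$ of the relevant exponent, and carry out the elementary $\dx y$-integration $\int \dx y\, e^{y(z-1)/t}$. This produces exactly one extra factor of $t/(z-1)$, which lowers the power of $(z-1)$ in the numerator by one and raises the power of $t$ in the denominator by one — precisely the shift from $\Psi^{p-n,t}_{n-k}$ (exponent $k$ on $z-1$, index $n+k+1$ on $z$) to $\Psi^{p-(n-1),t}_{(n-1)-k}$ (exponent $k$ on $z-1$, index $(n-1)+k+1=n+k$ on $z$). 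One must check the boundary term at $y\to\infty$ (or $y\to 0^+$) vanishes, which it does for the correct orientation of the $z$-contour.

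For identity (ii), ${\cal T}^{p-n}_{t,s}\ast\Psi^{p-n,s}_{n-k}=\Psi^{p-n,t}_{n-k}$, I would substitute the Bessel integral representation of ${\cal T}^{p-n}_{t,s}$ and the contour representation of $\Psi^{p-n,s}_{n-k}$, perform the $\dx y$-integral over $\R_+$ (a Gaussian-type / Laplace-type integral in $y$ after collecting the linear-in-$y$ and the $xy/y$-type terms; here a change of variables $z\mapsto z\cdot(\text{ratio of }s,t)$ analogous to the GUE proof is the natural move), and verify that the auxiliary Bessel variable integrates out to reproduce the clean $s\to t$ replacement in the definition of $\Psi$. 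Identity (iii), the commutation $\phi\ast{\cal T}^{p-n}_{t,s}={\cal T}^{p-(n-1)}_{t,s}\ast\phi$ — note the shift $n\mapsto n-1$ on the right, which is the novelty compared with Lemma~\ref{lem:1}(iii) — I would prove by a substitution argument: writing $(\phi\ast{\cal T}^{p-n}_{t,s})(x,z)=\int_0^x \dx y\,{\cal T}^{p-n}_{t,s}(y,z)$ and using the explicit form, one checks that shifting which variable the integration acts on converts the order-$(p-n)$ Bessel kernel into an order-$(p-n+1)$ one, using the contiguous relation between $I_m$ and $I_{m+1}$ (equivalently, the identity $\int_0^x \dx y\, (y/z)^{m/2} I_m(2\sqrt{yz}/c)\,c^{-1}e^{-(y+z)/c}$ is expressible via the order-$(m+1)$ kernel). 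Finally identity (iv), ${\cal T}^{p-n}_{t,s}\ast{\cal T}^{p-n}_{s,r}={\cal T}^{p-n}_{t,r}$, is the Chapman–Kolmogorov / semigroup property of the Laguerre (squared-Bessel) transition density and is classical; alternatively it follows by composing two Bessel integral representations and performing the Gaussian $\dx y$-integral.

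The main obstacle I expect is identity (iii): unlike the heat-kernel case, where $\phi$ and ${\cal T}_{t,s}$ genuinely commute, here the level index on ${\cal T}^{p-n}_{t,s}$ must shift by one when $\phi$ is moved across it, so one cannot simply use translation invariance as in Lemma~\ref{lem:1}(iii). The cleanest route is probably to not manipulate Bessel functions directly but to keep everything in contour-integral form throughout (using the representation of ${\cal T}^{p-n}_{t,s}$ via its auxiliary variable and of $\phi$ via its one-dimensional integral), so that (iii) reduces to a bookkeeping identity on powers of the auxiliary variables — exactly as (i) reduced to shifting a power of $z$. A secondary technical point is the careful justification of every interchange of integrals and of the vanishing of boundary terms at $0$ and $\infty$; these follow from choosing the contours and the parameter $\e$ so that the real parts of the exponents have the right sign, as in the GUE computation, and I would spell this out once and reuse it.
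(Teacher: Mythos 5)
Your overall strategy --- rewrite $\Psi$, ${\cal T}$, and $\phi$ as contour integrals, exchange the order of integration, evaluate the elementary inner integral, and kill the boundary term by a residue argument --- is the route the paper takes, and you have correctly identified that the crux of (iii) is to stay in contour form so that the level shift on ${\cal T}$ becomes a power shift on the auxiliary variable. However, the details as sketched would not go through, for two reasons. First, for (ii) the multiplicative rescaling $z\mapsto z\cdot(s/t)^{1/2}$ that works in the GUE proof (where the exponent is \emph{quadratic}) has no useful analogue in the Laguerre case, and the $\dx y$-integral here is a plain exponential integral, not a Gaussian one. The paper's key technical move, which your plan omits, is a \emph{fractional-linear} change of variables: $z=\tilde z/(\tilde z-t)$ in (\ref{eq62}) gives
\begin{equation*}
\Psi^{n,t}_k(x)=\frac{-1}{2\pi\I}\oint_{\Gamma_0}\dx\tilde z\,\frac{(\tilde z-t)^{n-1}}{\tilde z^{n+k+1}}\,e^{x/(\tilde z-t)},
\end{equation*}
and the analogous $z=(w-s)/(w-t)$ applied to the Bessel-contour form of ${\cal T}^n_{t,s}$ gives
\begin{equation*}
{\cal T}^n_{t,s}(x,y)=\frac{-1}{2\pi\I}\oint_{\Gamma_s}\dx w\,\frac{(w-t)^{n-1}}{(w-s)^{n+1}}\,e^{x/(w-t)-y/(w-s)}.
\end{equation*}
Once both objects carry exponentials of the form $e^{\,\cdot\,/(\,\cdot\,-t)}$, the $\dx y$-integral in (ii) collapses to a simple pole $\frac{1}{z-w}$, and the identity follows from a single contour deformation that picks up the residue at $z=w$. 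Without this M\"obius reparametrization, (ii) does not reduce to a clean residue computation. Second, your parenthetical bookkeeping for (i) is wrong: with (\ref{eq62}), $\Psi_{n-k}^{p-n,t}$ has integrand proportional to $(z-1)^{n-k}/z^{p-k+1}$, and the $\int_0^x\dx y$-integration lowers the $(z-1)$-exponent by one and drops one power of $t^{-1}$ in the prefactor, but leaves the $z$-exponent $p-k+1$ unchanged --- it is not the power of $z$ that shifts. This is cosmetic, but it signals that the arithmetic (and the residue check at $z=0$, which uses $k\ge 1$ and $n\le p$) still has to be carried through rather than pattern-matched from the GUE case.
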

To prove this lemma, we first obtain a different integral representation for (\ref{eq62}). Namely, after the change of variable $z=\tilde z/(\tilde z-t)$ we get
\begin{equation}\label{eqPsiLag}
\Psi_k^{n,t}(x)=\frac{-1}{2\pi\I} \oint_{\Gamma_0}\dx \tilde z \, \frac{(\tilde z-t)^{n-1}}{\tilde z^{n+k+1}} e^{x/(\tilde z-t)}.
\end{equation}

\begin{proof}[Proof of Lemma~\ref{lem:2}]
Using the representation (\ref{eqPsiLag}), we have
\begin{equation}
\begin{aligned}
(\phi \ast \Psi_k^{n,t})(x) &= \frac{-1}{2\pi \I} \oint_{\Gamma_0} \dx z\,\frac{(z-t)^{n-1}}{z^{n+k+1}} \int_0^x \dx y \, e^{y/(z-t)} \\
& = \frac{-1}{2\pi \I} \oint_{\Gamma_0} \dx z \, \frac{(z-t)^n}{z^{n+k+1}} \left(e^{x/(z-t)}-1\right) = \Psi_{k-1}^{n+1,t}(x)
\end{aligned}
\end{equation}
because for $k\geq 0$ the term independent of $x$ has residue equal to zero.

Using the integral representation (\ref{eq91}) of the modified Bessel function $I_n$ in (\ref{eq58}), we get (for $x,y>0, t>s>0$)
\begin{equation}\label{eqTLagB}
{\cal T}_{t,s}^n(x,y) = \frac{1}{2\pi \I (t-s)} \oint_{\Gamma_0} \frac{\dx z}{z^{n+1}} \exp\left(-\frac{x(1-z)+y(1-z^{-1})}{t-s}\right),
\end{equation}
and the change of variable $z=(w-s)/(w-t)$ leads to
\begin{equation}\label{eqTLag}
{\cal T}_{t,s}^n(x,y) = \frac{-1}{2\pi \I} \oint_{\Gamma_s} \dx w \, \frac{(w-t)^{n-1}}{(w-s)^{n+1}} e^{x/(w-t)-y/(w-s)}.
\end{equation}

We can choose the integration path with $w$ large and $z$ small so that $\Re(1/(z-s)-1/(w-t))<0$ (in particular, $z$ is contained in $\Gamma_s$, so that we write it explicitly as $\Gamma_{s,z}$). Then, we can exchange the integral over $y$ with the integral over $z$ and $w$,
\begin{equation}
\begin{aligned}
& ({\cal T}^n_{t,s} \ast \Psi_k^{n,s})(x) \\
& = \frac{1}{(2\pi \I)^2} \oint_{\Gamma_0} \dx z \frac{(z-s)^{n-1}}{z^{n+k+1}} \oint_{\Gamma_{s,z}} \dx w \frac{(w-t)^{n-1}}{(w-s)^{n+1}} e^{x/(w-t)} \int_{\R_+} \dx y\, e^{y/(z-s)-y/(w-s)} \\
& = \frac{1}{(2\pi \I)^2} \oint_{\Gamma_0} \dx z \, \frac{(z-s)^n}{z^{n+k+1}} \oint_{\Gamma_{s,z}} \dx w \, \frac{(w-t)^{n-1}}{(w-s)^n} e^{x/(w-t)} \, \frac{1}{z-w}.
\end{aligned}
\end{equation}
Now we enlarge the path of $z$ so that encloses the path of $w$. This can be made at the expense of the residue at $z=w$. Thus we get
\begin{equation}
\begin{aligned}
({\cal T}^n_{t,s} \ast \Psi_k^{n,s})(x) &= \frac{1}{(2\pi \I)^2} \oint_{\Gamma_{s}} \dx w \, \frac{(w-t)^{n-1}}{(w-s)^n} e^{x/(w-t)} \oint_{\Gamma_{0,w}} \dx z \, \frac{(z-s)^n}{z^{n+k+1}}\, \frac{1}{z-w}\\
&\qquad -\frac{1}{2\pi\I}\oint_{\Gamma_0}\dx w \,\frac{(w-t)^{n-1}}{w^{n+k+1}}e^{x/(w-t)}=\Psi_k^{n,t}(x),
\end{aligned}
\end{equation}
because the first term is zero, since the residue of $z$ at infinity is zero ($k\geq 0$).

For the third identity, we use the representation (\ref{eqTLagB}) in which we take the path $\Gamma_0$ for $w$ to satisfy $|w|>1$. Then,
\begin{equation}
\begin{aligned}
(\phi \ast {\cal T}_{t,s}^{n-1})(x,y)&= \frac{1}{2\pi \I(t-s)} \oint_{\Gamma_0} \frac{\dx w}{w^n} \,e^{-\frac{y}{t-s}(1-w^{-1})} \int_0^x\dx z\, e^{-\frac{z}{t-s} (1-w)} \\
& = \frac{1}{2 \pi \I} \oint_{\Gamma_0} \frac{\dx w}{(w-1)w^n}\,e^{-\frac{y}{t-s}(1-w^{-1})}\left( e^{-\frac{x}{t-s} (1-w)} -1\right).
\end{aligned}
\end{equation}
The last term (the integrand independent of $x$) is zero, because the integrand has residue zero at infinity, whenever $n-1\geq 0$.
Thus,
\begin{equation}
\begin{aligned}
(\phi \ast {\cal T}_{t,s}^{n-1})(x,y)&= \frac{1}{2 \pi \I} \oint_{\Gamma_0} \frac{\dx w}{(w-1)w^n}\,e^{-\frac{x}{t-s} (1-w)-\frac{y}{t-s}(1-w^{-1})} \\
& = \frac{1}{2\pi \I (t-s)} \oint_{\Gamma_0} \frac{\dx w}{w^{n+1}} \, e^{-\frac{x}{t-s}(1-w)} \int_y^\infty \dx z \, e^{-\frac{z}{t-s}(1-w^{-1})}\\
& = ({\cal T}_{t,s}^n \ast \phi)(x,y),
\end{aligned}
\end{equation}

The final identity is true because ${\cal T}^{n}$ it is the transition density of a $2n+1$ dimensional Bessel process.
\end{proof}

We proceed as in the proof of Theorem~\ref{ThmMainGUE} to show that the matrix $M$ is upper triangular. Indeed, with $\mathcal T^n := \mathcal T^{p-n}_{t^n_{c(n)},t^n_1}$ and Lemma~\ref{lem:2} we find
\begin{equation}
M_{k,\ell} = (\phi \ast \Psi_{k-\ell}^{p-k,t_{c(k)}^k})(x_k^{k-1})= \int_{\R_+} \dx x \, \Psi_{k-\ell}^{p-k,t_{c(k)}^k}(x) =
\begin{cases} 0, & \text{if } \ell < k, \\ 1, & \text{if } \ell = k,
\end{cases}
\end{equation}
because of the orthogonality between $\Psi^{n,t}_k$, $k\geq 1$, and the constant function.

\begin{lem} Define, for $\ell=0,\dots,n-1$, the polynomial $\Phi_\ell^{n,t}$ of degree $\ell$ by
\begin{equation}\label{eq68}
\Phi_\ell^{n,t}(x) = \frac{1}{2\pi\I} \oint_{\Gamma_{0,t}} \dx w\, \frac{w^{n+\ell}}{(w-t)^{n+1}}\,e^{-x/(w-t)}.
\end{equation}
These polynomials satisfy the orthogonal relation
\begin{equation}\label{eq68b}
\int_{\R_+}\dx x\, \Psi^{n,t}_k(x)\Psi^{n,t}_\ell(x)=\delta_{k,\ell}
\end{equation}
for $k,\ell=0,\dots,n-1$.
\end{lem}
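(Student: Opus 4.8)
The plan is to prove the biorthogonality $\int_{\R_+}\Psi^{n,t}_k(x)\,\Phi^{n,t}_\ell(x)\,\dx x=\delta_{k,\ell}$ for $0\le k,\ell\le n-1$ directly, by feeding the two single contour-integral representations into the integral and reducing everything to residue calculus. I would use the representation (\ref{eqPsiLag}) for $\Psi^{n,t}_k$ and the defining formula (\ref{eq68}) for $\Phi^{n,t}_\ell$, so that the product $\Psi^{n,t}_k(x)\Phi^{n,t}_\ell(x)$ becomes a double contour integral in $z$ (a small circle $\Gamma_0$ around $0$, with radius $<t$) and $w$, times $e^{x/(z-t)-x/(w-t)}$. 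The first point to settle is the $w$-contour: since the only singularity of the $\Phi$-integrand is at $w=t$, I may deform $\Gamma_{0,t}$ to a large circle; then $\Re(1/(z-t))$ stays close to $-1/t$ while $\Re(1/(w-t))$ is uniformly close to $0$, so $\Re(1/(z-t)-1/(w-t))<0$ on the contours and the $x$-integral converges absolutely. Interchanging the $x$-integral with the two contour integrals and using $\int_0^\infty e^{x(1/(z-t)-1/(w-t))}\dx x=-(z-t)(w-t)/(w-z)$ turns the expression (the two minus signs combining with the overall $-1$ from (\ref{eqPsiLag})) into
\[
\int_{\R_+}\Psi^{n,t}_k(x)\,\Phi^{n,t}_\ell(x)\,\dx x=\frac{1}{(2\pi\I)^2}\oint_{\Gamma_0}\dx z\oint_{\Gamma_{0,t}}\dx w\,\frac{(z-t)^n}{z^{n+k+1}}\,\frac{w^{n+\ell}}{(w-t)^n}\,\frac{1}{w-z}.
\]

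The key step is then the $z$-integral, performed with $w$ fixed on the large circle (so $w$ lies outside $\Gamma_0$). As a function of $z$ the integrand has poles only at $z=0$ and $z=w$, and since it decays like $z^{-k-2}$ at infinity with $k\ge0$, its residue at infinity vanishes; hence $\mathrm{Res}_{z=0}=-\mathrm{Res}_{z=w}$, which gives $\frac{1}{2\pi\I}\oint_{\Gamma_0}\frac{(z-t)^n}{z^{n+k+1}(w-z)}\dx z=\frac{(w-t)^n}{w^{n+k+1}}$ (equivalently, expand $1/(w-z)$ geometrically and recognise a binomial sum for $(1-t/w)^n$). This is the crucial cancellation: the $(w-t)^n$ just produced kills the $(w-t)^n$ in the denominator, so only $\frac{1}{2\pi\I}\oint_{\Gamma_{0,t}}w^{\ell-k-1}\dx w$ survives, and this equals $\delta_{k,\ell}$ because $w^{\ell-k-1}$ is holomorphic at $w=t$ and has a nonzero residue at $w=0$ exactly when $\ell=k$. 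It is convenient to insert here the elementary check that $\Phi^{n,t}_\ell$ really is a degree-$\ell$ polynomial: expanding $e^{-x/(w-t)}$ and reading off the residue at $w=t$ gives $\Phi^{n,t}_\ell(x)=\sum_{j=0}^{\ell}\frac{(-x)^j}{j!}\binom{n+\ell}{n+j}t^{\ell-j}$; in particular $\Phi^{n,t}_0\equiv1$, so the case $\ell=0$ of the orthogonality is precisely the orthogonality of $\Psi^{n,t}_k$ ($k\ge1$) to the constants used just above.

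I expect the only real obstacle to be the contour bookkeeping, not any analytic estimate: one must verify simultaneously that the large circle used for $w$ keeps $w=t$ inside (so nothing is lost in deforming $\Gamma_{0,t}$), makes the $x$-integral absolutely convergent, and contains the small circle $\Gamma_0$ (hence $z=0$ and each $z$ on $\Gamma_0$) in its interior — all of which hold once the radius is large enough, but this should be stated with care, and one should note that the original contours in (\ref{eqPsiLag}) and (\ref{eq68}) may be deformed to these positions since the respective integrands have no other singularities. A slightly longer alternative would be to rescale $x\mapsto tx$ to reduce to $t=1$, identify $\Psi^{n,1}_k$ and $\Phi^{n,1}_k$ with the (suitably normalised) Laguerre functions and polynomials of Appendix~\ref{AppLaguerre}, and invoke the classical Laguerre orthogonality; I would still prefer the direct contour argument, which sidesteps all normalisation constants and mirrors the GUE computation.
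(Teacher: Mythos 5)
Your proof is correct, but it takes a different route from the paper's. The paper simply observes that the change of variable $w=\tilde w/(\tilde w-t)$ in the Laguerre contour-integral representation of Appendix~\ref{AppLaguerre} yields $\Phi^{n,t}_\ell(x)=t^\ell L_\ell^n(x/t)$, and then the orthogonality follows in one line from the rescaling $x\mapsto xt$ and the classical Laguerre relation (\ref{OrthoLaguerre}) — this is precisely the ``slightly longer alternative'' you mention and set aside at the end. Your direct computation, by contrast, is self-contained: it never needs to recognise $\Psi$ and $\Phi$ as Laguerre functions, and the two steps you use (the $x$-integral producing the Cauchy kernel $1/(w-z)$, and the $z$-residue at infinity vanishing so that $\mathrm{Res}_{z=0}=-\mathrm{Res}_{z=w}$) are exactly the manoeuvres already used in the rest of the paper's kernel computations, which is a real merit of your argument. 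Your contour bookkeeping is handled with the right care — deforming $\Gamma_{0,t}$ to a large circle (possible since $w=t$ is the only singularity of the $\Phi$-integrand) is needed both to keep $\Re\bigl(1/(z-t)-1/(w-t)\bigr)$ uniformly negative for the Fubini step and to guarantee that $\Gamma_0$ sits inside so that the $z=w$ pole is the one you discard. The only blemish is on the paper's side, not yours: (\ref{eq68b}) as printed reads $\Psi^{n,t}_k\Psi^{n,t}_\ell$, but this is a typo for $\Psi^{n,t}_k\Phi^{n,t}_\ell$, which is the biorthogonality relation (\ref{eq41}) actually required and the one you (and the paper's own proof) establish.
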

\begin{proof}
By the integral representation in Appendix~\ref{AppLaguerre} for Laguerre polynomials, we have
\begin{equation}
t^\ell L_{\ell}^n(x/t) = \frac{t^\ell}{2\pi\I} \oint_{\Gamma_1} \dx w\, \frac{w^{n+\ell}}{(w-1)^{\ell+1}}\,e^{-x(w-1)/t}
\end{equation}
and, after change of variable $w=\tilde w/(\tilde w-t)$, we get $t^\ell L_{\ell}^n(x/t) = \Phi_\ell^{n,t}(x)$ as defined in (\ref{eq68}).
The orthogonality relation (\ref{eq68b}) holds because after the change of variable $x\to xt$, the left-hand side becomes
\begin{equation}
t \int_{\R_+}\dx x\, \Psi^{n,t}_k(x t)\Phi^{n,t}_\ell(x t)=\frac{k! t^\ell}{(n+k)!t^{k}} \int_{\R_+}\dx x\, x^n e^{-x} L_\ell^n(x) L_k^n(x) =\delta_{k,\ell},
\end{equation}
which is the orthogonal relation (\ref{OrthoLaguerre}) for Laguerre polynomials.
\end{proof}

We now compute the kernel and start with the sum in (\ref{eq60}). Let us use the notations  $t_1=t_{a_1}^{n_1}$, $t_2=t_{a_2}^{n_2}$. Then we get
\begin{multline}
\sum_{k=1}^{n_2} \Psi_{n_1-k}^{p-n_1,t_1}(x_1) \Phi_{n_2-k}^{p-n_2,t_2}(x_2) \\
= \frac{-1}{(2\pi\I)^2} \oint_{\Gamma_0} \dx z \oint_{\Gamma_{0,t_2}}\dx w \, \frac{e^{x_1/(z-t_1)}}{e^{x_2/(w-t_2)}} \, \frac{(z-t_1)^{p-n_1-1}}{(w-t_2)^{p-n_2+1}}\,\frac{w^p}{z^{p+1}} \sum_{k=1}^{n_2} \left( \frac{z}{w}\right)^k.
\end{multline}
We choose $\Gamma_0$ and $\Gamma_{0,t_2}$ such that they do not intersect, i.e., $|z|<|w|$. For $k>n_2$ the pole at $w=\infty$ vanishes and we can thus extend the summation over $k$ to $\infty$ with the result
\begin{equation}
\frac{-1}{(2\pi\I)^2} \oint_{\Gamma_0} \dx z \oint_{\Gamma_{z,t_2}}\dx w \, \frac{e^{x_1/(z-t_1)}}{e^{x_2/(w-t_2)}} \, \frac{(z-t_1)^{p-n_1-1}}{(w-t_2)^{p-n_2+1}}\,\frac{w^p}{z^p} \, \frac{1}{w-z},
\end{equation}
which is the second term in the kernel in Theorem~\ref{ThmMainWishart}.

It remains to compute $\phi^{(t_{a_1}^{n_1},t_{a_2}^{n_2})}$. To simplify the notations, we set $\phi^{(t_{a_1}^{n_1},t_{a_2}^{n_2})}(x,y)=\phi^{(n_1,t_1;n_2,t_2)}(x,y)$. By Lemma~\ref{lem:2} we have
\begin{equation}
\phi^{(n_1,t_1;n_2,t_2)} =
\begin{cases}
{\cal T}_{t_2,t_1}^{p-n_1} \ast \phi^{\ast(n_2-n_1)}, & \text{if } (n_1,t_1) \prec (n_2,t_2), \\ 0, & \text{otherwise}.
\end{cases}
\end{equation}
The integral representation (\ref{eqTLagB}) for ${\cal T}$ and $\phi^{\ast n}(x,y)=\frac{(x-y)^{n-1}}{(n-1)!}\phi(x,y)$ lead to
\begin{multline}
\phi^{(n_1,t_1;n_2,t_2)}(x,y)\\
= \frac{(t_1-t_2)^{-1}}{2\pi \I} \oint_{\Gamma_0} \dx w\,\frac{e^{-(1-w)x/(t_1-t_2)}}{w^{p+1-n_1}} \int_y^\infty \dx z\,e^{-z(1-w^{-1})/(t_1-t_2)} \frac{(z-y)^{n_2-n_1-1}}{(n_2-n_1-1)!} \\
= \frac{(t_1-t_2)^{n_2-n_1-1}}{2\pi \I} \oint_{\Gamma_0} \dx w\, \frac{e^{-x(1-w)/(t_1-t_2)-y(1-w^{-1})/(t_1-t_2)}}{w^{p+1-n_2}(w-1)^{n_2-n_1}}.
\end{multline}
Finally, the change of variable $w=(z-t_2)/(z-t_1)$ gives the first term in Theorem~\ref{ThmMainWishart}.

\section{Markov property on space-like paths}\label{AppMarkov}
The process on matrices is clearly a Markov process along space-like paths. What we have to see is that the Markov property still holds for the eigenvalues. The key ingredients are that the measure on matrices is invariant under choice of basis, and that the choice of basis at an observation point $(n,t)$ depends neither on the eigenvalues at that the previous point ($(n+1,t)$ or $(n,t')$ with $t'<t$) nor on the eigenvalues at $(n,t)$.

\subsubsection*{Diffusion on GUE minors}
We first consider Dyson's diffusion. Here we denote by $H(n,t)$ the \mbox{$n\times n$} minor at time $t$ and by $\Lambda(n,t)$ the diagonalized matrix of $H(n,t)$ which is obtained from conjugation by the unitary matrix $U(n,t)$,
\begin{equation}\label{eq76}
H(n,t)=U(n,t)\Lambda(n,t)U^*(n,t).
\end{equation}
The Jacobian of the transformation $H(n,t)\to(\Lambda(n,t),U(n,t))$ gives
\begin{equation}\label{eq78}
\dx H(n,t)=\Delta(\Lambda(n,t))^2\,\dx\Lambda(n,t)\,\dx\mu_n(U(n,t))
\end{equation}
where $\dx\mu_n$ denotes the Haar measure on the unitary group ${\cal U}(n)$.

Consider a measure at $(n,t)$ which is invariant under unitary transformations, i.e.\ with respect to the group ${\cal U}(n)$. It has the form
\begin{equation}\label{MP0}
f_1(\Lambda(n,t))\, \dx\Lambda(n,t)\, \dx\mu_n(U(n,t))
\end{equation}
for some explicit function $f_1$ (e.g.\ $f_1(H)=\exp(-\Tr(H^2)/t)$).

Next fix $n$ and consider times $t'>t$. Then, the probability measure on the matrices has the form (see (\ref{eq32}))
\begin{equation}
\begin{aligned}
& f_1(\Lambda(n,t)) e^{-b\Tr((H(n,t')-H(n,t)))^2}\,\dx\Lambda(n,t)\,\dx\mu_n(U(n,t)) \,\dx H(n,t')\\
= {} & f_1(\Lambda(n,t)) e^{-b\Tr((\Lambda(n,t')-\tilde U(n,t)\Lambda(n,t)\tilde U^*(n,t)))^2}\,\dx\Lambda(n,t)\,\dx\mu_n(\tilde U(n,t)) \,\dx H(n,t')
\end{aligned}
\end{equation}
because of the unitary invariance of the Haar measure (here we have set $\tilde U(n,t)=U(n,t')^* U(n,t)$). The integration with respect to $\dx\mu_n(\tilde U(n,t))$ is made by the well-known Harish-Chandra/Itzykson-Zuber (\ref{eqIZHC1}) and the result is as in Lemma~\ref{LemGUEFixedn}. We are left with a probability density that depends only on the eigenvalues times $\dx H(n,t')$, that is, \emph{the projection onto eigenvalues at time $t$ did not restrict the complete freedom of choice of basis at $(n,t')$}. Otherwise stated, by the decomposition (\ref{eq78}), after integration over $\dx\mu_n(\tilde U(n,t))$ we have a measure on eigenvalues times $\dx\mu_n(U(n,t'))$ of the form (for some explicit $f_2$, which can be easily computed)
\begin{equation}\label{MP1}
f_2(\Lambda(n,t),\Lambda(n,t'))\,\dx\Lambda(n,t)\,\dx\Lambda(n,t')\,\dx\mu_n(U(n,t')).
\end{equation}

The other choice is to consider $t$ fixed and look at the measure at $(n,t)$ and $(n-1,t)$. The result explained in Proposition~4.2 of~\cite{Bar01} is actually a conditional measure on eigenvalues given the eigenvalues of the minor of size $n$, thus it is not restricted to GUE, but it holds for any measure which is invariant under ${\cal U}(n)$, see Theorem~3.4 of~\cite{Def08}. The projection on the eigenvalues at $(n,t)$ and $(n-1,t)$ leads to Lemma~\ref{LemGUEFixedt}. We can also decide to project on the eigenvalues at $(n,t)$ and $(n-1,t)$ and the eigenvectors at $(n-1,t)$. This means that we do not integrate out the variables corresponding to the unitary transformations of the $(n-1)\times (n-1)$ minor given by
\begin{equation}
\left(
  \begin{array}{cc}
    U(n-1,t) & 0 \\
    0 & 1 \\
  \end{array}
\right),\quad \textrm{with }U(n-1,t)\textrm{ distributed as }\dx\mu_{n-1},
\end{equation}
which form a subgroup of ${\cal U}(n)$. The eigenvalues $\Lambda(n-1,t)$ are independent of the eigenvectors (thus of the choice of basis $U(n-1,t)$) and the measure on $U(n-1,t)$ is $\dx\mu_{n-1}(U(n-1,t))$ (see e.g.~Corollary~2.5.4 in~\cite{AGZ10}). The measure on $H(n,t)$ is invariant under ${\cal U}(n)$, so are the eigenvalues $\Lambda(n,t)$ independent of the choice of $U(n-1,t)$ (this last property follows also from the direct computation in Section~3.1 of~\cite{FN08b}; Section~3.2 for Wishart matrices). Thus, the projection on the eigenvalues at $(n,t)$ and $(n-1,t)$ and the eigenvectors at $(n-1,t)$ leads to a measure of the form
\begin{equation}\label{MP2}
f_3(\Lambda(n,t),\Lambda(n-1,t))\,\dx \Lambda(n,t) \,\dx\Lambda(n-1,t) \,\dx\mu_{n-1}(U(n-1,t)).
\end{equation}
for some explicit function $f_3$ (compare with Lemma~\ref{LemGUEFixedt}).

To resume, (\ref{MP1}) and (\ref{MP2}) tell us that starting from a measure of the form (\ref{MP0}), in which the choice of basis is completely free, the projection onto the eigenvalues obtained by integration over the angular variables does not fix the basis at the next step in the basic steps of space-like paths. This implies that the eigenvalues' process along space-like paths is a Markov process.

\subsubsection*{Diffusion on Wishart minors}
Consider now diffusion on Wishart matrices. Let $H(n,t)=A^*(n,t)A(n,t)$ be the \mbox{$n\times n$} minor at time $t$, where $A(n,t)$ is the $p\times n$ matrix with singular value decomposition $A(n,t)=U(p,t)\Sigma(n,t)V^*(n,t)$, where $U(p,t)$ is a $p\times p$ Haar-distributed on ${\cal U}(p)$, $V(n,t)$ is a $n\times n$ Haar-distributed on ${\cal U}(n)$, and $\Sigma(n,t)$ is a $p\times n$ matrix with entries zeros except on the diagonal, where we find the singular values of $A(n,t)$. Also, let $\Lambda(n,t)=\Sigma^*(n,t)\Sigma(n,t)$ the matrix of the eigenvalues of $H(n,t)$. Thus we have
\begin{equation}\label{eq76b}
H(n,t)=A^*(n,t)A(n,t)=V(n,t)\Lambda(n,t)V^*(n,t).
\end{equation}
The Jacobian of the transformation $A(n,t)\to(\Sigma(n,t),U(n,t),U(p,t))$ gives (see e.g.~\cite{Mui82})
\begin{multline}
 \dx A(n,t)={\rm const}\times (\det(\Sigma^\ast(n,t)\Sigma(n,t)))^{p-n+1/2}\Delta^2(\Sigma^\ast(n,t)\Sigma(n,t)) \\
\times \dx \Sigma(n,t)\,\dx \mu_n(V(n,t))\, \dx\mu_p(U(p,t)),
\end{multline}
or, using that $\Lambda(n,t) = \Sigma^\ast(n,t)\Sigma(n,t)$,
\begin{multline}\label{eq78b}
\dx A(n,t)={\rm const}\times (\det(\Lambda(n,t)))^{p-n}\Delta^2(\Lambda(n,t))\\
\times\dx\Lambda(n,t)\,\dx\mu_n(V(n,t))\,\dx\mu_p(U(p,t)).
\end{multline}
Therefore, the starting measure at $(n,t)$ has the form
\begin{equation}\label{MP0b}
g_1(\Lambda(n,t))\, \dx\Lambda(n,t)\, \dx\mu_n(V(n,t)) \, \dx\mu_p(U(p,t))
\end{equation}
for some explicit function $g_1$.

Next consider fixed $n$ and time $t'>t$. Then, the probability measure on the matrices has the form (see (\ref{eq32b}))
\begin{equation}
\begin{aligned}
& g_1(\Lambda(n,t)) \,e^{-b\Tr((A^*(n,t')-A^*(n,t))(A(n,t')-A(n,t)))}\\
& \qquad \qquad \times\dx\Lambda(n,t)\,\dx\mu_n(V(n,t)) \,\dx\mu_p(U(p,t))\,\dx A(n,t') \\
={}&g_1(\Lambda(n,t)) \, e^{-b\Tr\left([\Sigma^*(n,t')-\tilde V^*(n,t)\Sigma^*(n,t)\tilde U(p,t)]\, [\Sigma(n,t')-\tilde U(p,t)\Sigma(n,t)\tilde V^*(n,t)]\right)}\\
&\qquad \qquad \times \,\dx\Lambda(n,t)\,\dx\mu_n(\tilde V(n,t)) \,\dx\mu_p(\tilde U(p,t))\,\dx A(n,t')
\end{aligned}
\end{equation}
because of unitary invariance of the Haar measure (we set $\tilde V(n,t)=V(n,t')^*V(n,t)$ and $\tilde U(p,t)=U(p,t')^*U(p,t)$). An integration with respect to $\dx\mu_n(\tilde V(n,t))\,\dx\mu_p(\tilde U(p,t))$ according to (\ref{eqIZHC2}) results in the formula of Lemma~\ref{LemLUEFixedn}. We are left with a probability density that depends only on the eigenvalues times $\dx A(n,t')$, that is, the projection onto eigenvalues at time $t$ did not restrict the complete freedom of choice of basis at $(n,t')$. Otherwise stated, by (\ref{eq78b}) we have a measure on eigenvalues times $\dx\mu_n(V(n,t'))\,\dx\mu_p(U(p,t'))$ of the form (for some explicit $g_2$, which can be easily computed)
\begin{equation}\label{MP1b}
g_2(\Lambda(n,t),\Lambda(n,t'))\,\dx\Lambda(n,t)\,\dx\Lambda(n,t')\,\dx\mu_n(V(n,t'))\,\dx\mu_p(U(p,t')).
\end{equation}

The other choice is to consider $t$ fixed and look at the measure at $(n,t)$ and $(n-1,t)$. This works as for the Hermitian case and we get a measure of the form
\begin{equation}\label{MP2b}
g_3(\Lambda(n,t),\Lambda(n-1,t))\,\dx \Lambda(n,t) \,\dx\Lambda(n-1,t) \,\dx\mu_{n-1}(V(n-1,t))\,\dx\mu_p(U(p,t)).
\end{equation}
for some explicit function $g_3$ (compare with Lemma~\ref{LemLUEFixedt}).

Therefore (\ref{MP1b}) and (\ref{MP2b}) tell us that starting from a measure of the form (\ref{MP0b}), in which the choice of basis (in which the matrix $A$ is represented) is completely free, the projection onto the eigenvalues obtained by integration over the angular variables does not fix the basis at the next step in the basic steps of space-like paths. This implies that the eigenvalues' process along space-like paths is a Markov process.

\appendix

\section{Space-like determinantal correlations}\label{AppCorrelation}
For convenience we report here the statement of Theorem~4.2 of~\cite{BF07}.

Let $\X_1,\dots,\X_N$ be finite sets and $c(1),\dots,c(N)$ be arbitrary nonnegative integers. Consider the set
\begin{equation}
\X=(\X_1\sqcup\dots\sqcup\X_1)\sqcup\dots\sqcup(\X_N\sqcup \dots\sqcup\X_N)
\end{equation}
with $c(n)+1$ copies of each $\X_n$. We want to consider a special form of the weight $W(X)$ for any subset $X\subset\X$, which turns out to have determinantal correlations.

To define the weight we need a bit of notations. Let
\begin{equation}
\begin{aligned}
\phi_n(\,\cdot\,,\,\cdot\,):\X_{n-1}\times\X_{n}\to \C,\qquad &n=2,\dots,N, \\
\phi_n(\virt,\,\cdot\,):\X_{n}\to\C,\qquad &n=1,\dots,N,\\
\Psi^N_j(\,\cdot\,):\X_N\to\C,\qquad &j=0,\dots,N-1,
\end{aligned}
\end{equation}
be arbitrary functions on the corresponding sets. Here the symbol
$\virt$ stands for a ``virtual'' variable, which is convenient to
introduce for notational purposes. In applications $\virt$ can
sometimes be replaced by $+\infty$ or $-\infty$. The $\phi_n$ represents the transitions from $\X_{n-1}$ to $\X_n$.

Also, let
\begin{equation}
t_{0}^N\leq\dots\leq t_{c(N)}^N= t_{0}^{N-1}\leq\dots\leq
t_{c(N-1)}^{N-1}=t_0^{N-2}\leq \dots\leq t^2_{c(2)}=t^1_0\leq\dots \leq
t^1_{c(1)}
\end{equation}
be real numbers. In applications, these numbers refer to time moments.
Finally, let
\begin{equation}
{\cal T}_{t_a^n,t_{a-1}^n}(\,\cdot\,,\,\cdot\,):\X_n\times\X_n\to \C,\qquad n=1,\dots,N,\quad a=1,\dots,c(n),
\end{equation}
be arbitrary functions. The ${\cal T}_{t_a^n,t_{a-1}^n}$ represents the transition between two copies of $\X_n$ associated to ``times'' $t^n_{a-1}$ and $t^n_a$.

Then, to any subset $X\subset\X$ we assign its weight $W(X)$ as follows. $W(X)$ is zero unless $X$ has exactly $n$ points in each copy of $\X_n$, $n=1,\dots,N$. In the latter case, denote the points of $X$ in the $m$th copy of
$\X_n$ by $x^n_k(t^n_m)$, $k=1,\dots,n$, $m=0,\dots,c(n)$. Thus,
\begin{equation}
X=\{x^n_k(t^n_m)\mid k=1,\dots,n;\,m=0,\dots,c(n);\, n=1,\dots,N\}.
\end{equation}
Set
\begin{multline}\label{eqAppDetMeasure}
W(X)=  \prod_{n=1}^{N} \Bigg[\det{\bigl[\phi_n(x_k^{n-1}(t_0^{n-1}),x_l^n(t^n_{c(n)}))\bigr]}_{1\leq k,l\leq n} \\
\times\prod_{a=1}^{c(n)} \det{\bigl[{\cal T}_{t_a^n,t_{a-1}^n}(x_k^n(t^n_a),x^n_l(t^n_{a-1}))\bigr]}_{1\leq k,l\leq n}
\Bigg]\det{\bigl[\Psi^{N}_{N-l}(x^{N}_k(t_0^{N}))\bigr]}_{1\leq k,l\leq N},
\end{multline}
where $x^{n-1}_{n}(\,\cdot\,)=\virt$ for all $n=1,\dots,N$.

In what follows we assume that the partition function of our weights
does not vanish,
\begin{equation}
Z:=\sum_{X\subset \X} W(X)\ne 0.
\end{equation}
Under this assumption, the normalized weights $\widetilde W(X)=W(X)/Z$
define a (generally speaking, complex valued) measure on $2^\X$
of total mass $1$. One can say
that we have a (complex valued) random point process on $\X$, and
its correlation functions are defined accordingly, see e.g.,~\cite{RB04}.
We are interested in computing these correlation functions.

Let us introduce the compact notation for the convolution of several transitions. For any $n=1,\dots,N$ and two time moments $t_a^n>t_b^n$ we define
\begin{equation}
{\cal T}_{t_a^n,t_b^n}={\cal T}_{t_a^n,t_{a-1}^n}*{\cal T}_{t_{a-1}^n,t_{a-2}^n}*\cdots *{\cal T}_{t_{b+1}^n,t_b^n},\qquad {{\cal T}}^n = {{\cal T}}_{t_{c(n)}^n,t_0^n},
\end{equation}
where we use the notation $(f*g)(x,y):=\sum_{z}f(x,z)g(z,y).$
For any time moments
$t_{a_1}^{n_1}\geq t_{a_2}^{n_2}$ with $(a_1,n_1)\ne (a_2,n_2)$, we denote the convolution over all the transitions between them by $\phi^{(t_{a_1}^{n_1},t_{a_2}^{n_2})}$:
\begin{equation}
\phi^{(t_{a_1}^{n_1},t_{a_2}^{n_2})}={{\cal T}}_{t^{n_1}_{a_1},t^{n_1}_{0}} * \, \phi_{n_1+1}*{{\cal T}}^{n_1+1} *\cdots*\phi_{n_2}*{{\cal T}}_{t^{n_2}_{c(n_2)},t^{n_2}_{a_2}}.
\end{equation}
If there are no such transitions, i.e., if $t_{a_1}^{n_1}<t_{a_2}^{n_2}$ or $(a_1,n_1)=(a_2,n_2)$, we set
$\phi^{(t_{a_1}^{n_1},t_{a_2}^{n_2})}=0$.

Furthermore, define the matrix $M={\Vert M_{k,l}\Vert}_{k,l=1}^N$ by
\begin{equation}
M_{k,l}=\big(\phi_{k}*{{\cal T}}^k*\cdots *\phi_{N}*{{\cal T}}^{N}*\Psi^{N}_{N-l}\big)(\virt)
\end{equation}
and the vector
\begin{equation}
\Psi^{n,t^n_a}_{n-l}=\phi^{(t^n_a,t^{N}_0)}*\Psi^{N}_{N-l},\qquad l=1,\dots,N.
\end{equation}
The following statement describing the correlation kernel is a part of Theorem 4.2 of~\cite{BF07}.

\begin{thm}[Part of Theorem 4.2 of~\cite{BF07}]\label{ThmPushASEP}
Assume that the matrix $M$ is invertible. Then \mbox{$Z=\det M\neq 0$}, and the (complex valued) random point process on $\X$ defined by the weights $\widetilde W(X)$ is determinantal. Its correlation kernel can be written in the form
\begin{multline}
K(n_1,t^{n_1}_{a_1},x_1; n_2,t^{n_2}_{a_2},x_2)= -\phi^{(t^{n_1}_{a_1},t^{n_2}_{a_2})}(x_1,x_2) \\
+ \sum_{k=1}^{N} \sum_{l=1}^{n_2} \Psi^{n_1,t^{n_1}_{a_1}}_{n_1-k}(x_1) [M^{-1}]_{k,l} (\phi_l * \phi^{(t^l_{c(l)},t^{n_2}_{a_2})})(\virt,x_2).
\end{multline}
\end{thm}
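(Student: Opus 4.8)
The statement is the Eynard--Mehta theorem in the generality needed for multi-level, multi-time (space--like) processes, and I would prove it by the standard Cauchy--Binet collapse together with the associativity of convolution. The plan is first to recognize $W(X)$ as a \emph{chain measure}: arrange all the copies of the $\X_n$'s into one increasing chain of levels --- the single virtual point $\virt$ at the top, then the copies of $\X_1$ at ``times'' $t^1_{c(1)}>\dots>t^1_0$ (with one point each), then the copies of $\X_2$ (two points each), down to the copies of $\X_N$ ($N$ points each), and finally the $N$ boundary functions $\Psi^N_{N-l}$, $l=1,\dots,N$, at the bottom. Consecutive levels are joined either by a transition $\phi_n$, when one passes from $\X_{n-1}$ to $\X_n$ --- this is the rank-increasing step, made into a square $n\times n$ block by the virtual row $\phi_n(\virt,\cdot)$ --- or by a transition ${\cal T}_{t^n_a,t^n_{a-1}}$ between two copies of the same $\X_n$. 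With the convolution $(f*g)(x,y)=\sum_z f(x,z)g(z,y)$, associativity of $*$ identifies the composite kernels $\phi^{(t^{n_1}_{a_1},t^{n_2}_{a_2})}$ and the functions $\Psi^{n,t^n_a}_{n-l}$ as the running convolutions along the chain, and $M_{k,l}=(\phi_k*{\cal T}^k*\cdots*\phi_N*{\cal T}^N*\Psi^N_{N-l})(\virt)$ as the full convolution applied to the boundary data.

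Next I would compute the partition function $Z=\sum_X W(X)$ by summing out the variables one level at a time, from the bottom of the chain upward. Each such sum is an instance of the discrete Cauchy--Binet (Andr\'eief) identity $\sum\det[f_i(z_k)]\det[g_k(z_j)]=\det[\sum_z f_i(z)g_j(z)]$, which merges the determinant of the next transition into the running determinant; the rank-increasing steps through $\phi_n$ cause no trouble, since the extra row $\phi_n(\virt,\cdot)$ keeps every block square. Iterating to the top collapses $\sum_X W(X)$ to the single $N\times N$ determinant $Z=\det M$, which is nonzero by the invertibility hypothesis; this also justifies normalizing $W$ to a (signed) measure of total mass one.

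For the correlation functions I would run the same collapse but stop at the observed points. Fixing $m$ points $(n_i,t^{n_i}_{a_i},x_i)$ and summing $W(X)$ over all configurations that contain them, the level-by-level merging again reduces everything to a determinant $\det[K(n_i,t^{n_i}_{a_i},x_i;n_j,t^{n_j}_{a_j},x_j)]_{1\le i,j\le m}$; this is the usual Eynard--Mehta mechanism, and the remaining work is to read off which convolutions survive. Two contributions appear. The chain of transitions directly linking one observed point to a second point lying above it in the chain produces $-\phi^{(t^{n_1}_{a_1},t^{n_2}_{a_2})}(x_1,x_2)$; by the very definition of $\phi^{(\cdot,\cdot)}$ this vanishes unless there are such transitions, i.e.\ unless the two points are comparable in the chain order, which on the original lattice is precisely the condition that they lie on a common space--like path. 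The interaction with the boundary functions $\Psi^N$, after inverting the Gram-type matrix $M$, produces the rank-$N$ correction $\sum_{k=1}^N\sum_{l=1}^{n_2}\Psi^{n_1,t^{n_1}_{a_1}}_{n_1-k}(x_1)[M^{-1}]_{k,l}(\phi_l*\phi^{(t^l_{c(l)},t^{n_2}_{a_2})})(\virt,x_2)$, in which the two half-convolutions are the running convolution from the bottom boundary up to the first observed point and from $\virt$ down to the second. Assembling the two pieces yields the claimed kernel.

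The main obstacle is not conceptual --- everything rests on Cauchy--Binet and associativity of convolution --- but the bookkeeping of the multi-copy, variable-rank chain: one must keep track of the $c(n)+1$ copies of each $\X_n$ at their respective times, thread the virtual variable correctly through every rank-increasing $\phi_n$ step, and verify that the partial convolutions produced by the collapse really coincide with the $\phi^{(\cdot,\cdot)}$'s and $\Psi^{n,t}_\cdot$'s defined above. In the finite setting considered here there are no convergence issues, and the self-consistency of the kernel (that $\int K$ against the relevant $\Psi$'s vanishes, which is what the factor $[M^{-1}]$ enforces) follows automatically once the Cauchy--Binet identities have been applied. Checking the formula on the smallest nontrivial case ($N$ and the $c(n)$ small) is a useful sanity check before carrying out the general induction.
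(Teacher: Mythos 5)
Note first that the paper does not prove Theorem~\ref{ThmPushASEP}: it is stated ``for convenience'' as a quotation of Theorem~4.2 of~\cite{BF07}, and the paper itself records that the proof there rests on the conditional $L$-ensemble formalism of~\cite{RB04}, with an alternative proof in Section~4.4 of~\cite{FN08}. So there is no ``paper's own proof'' for your sketch to match; the relevant comparison is with the two cited routes.

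Your proposal takes the direct Cauchy--Binet / biorthogonalization route rather than the $L$-ensemble formalism of~\cite{RB04} used in~\cite{BF07}. Both are standard, and the trade-off is as you describe: the $L$-ensemble approach packages the whole argument into a single algebraic identity for the kernel of a conditioned $L$-ensemble, at the price of an explicit block-matrix inversion; the direct route makes the two pieces of the kernel --- the chain convolution $-\phi^{(\cdot,\cdot)}$ and the $M^{-1}$-weighted boundary term --- visible, at the price of heavy bookkeeping with the varying-rank, multi-time chain. On that bookkeeping your sketch is a bit too optimistic in one spot. Cauchy--Binet does cleanly collapse $Z$ to $\det M$ (up to factorials absorbed in the normalization), but ``run the same collapse and stop at the observed points'' is not quite how the correlation functions emerge: one must expand the product of determinants, integrate out all unobserved variables, and then either invoke the conditional $L$-ensemble inversion or carry out a Cramer's-rule computation on $M$ with careful sign bookkeeping over permutations before the $m$-point correlation assembles into $\det[K(\cdot,\cdot)]_{m\times m}$. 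You correctly flag this as the main obstacle --- but in a full write-up that is exactly where the lemma that makes ``the usual Eynard--Mehta mechanism'' work for variable-rank, multi-time chains must be supplied, and supplying it is essentially the content of~\cite{RB04} or of Section~4.4 of~\cite{FN08}.
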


\begin{remark}\label{RemarkKernel}
As stated in the complete statement of Theorem 4.2 of~\cite{BF07}, there is one situation where the kernel takes a simple formula. Namely, when the matrix $M$ is upper-triangular, then
\begin{equation}
\Phi^{n_2,t_{a_2}^{n_2}}_{n_2-k}(x):=\sum_{l=1}^{n_2} [M^{-1}]_{k,l} (\phi_l * \phi^{(t^l_{c(l)},t^{n_2}_{a_2})})(\virt,x)
\end{equation}
are the function biorthogonal to $\Psi^{n_2,t_{a_2}^{n_2}}_{n_2-k}(x)$ obtained for the non-extended kernel (i.e., at fixed level and fixed time). In the case of random matrices which we consider, the functions $\Phi^{n,t}_k$, $k=0,\dots,n-1$, have to be polynomials of degree $k$ because $\det(\Phi^{n,t}_k(x_j))_{1\leq j,k\leq n}$ must be proportional to \mbox{$\Delta(x)=\det(x_j^{k-1})_{1\leq j,k\leq n}$}, the Vandermonde determinant. Then, the kernel is simply written as
\begin{equation}\label{eq60}
K(n_1,t^{n_1}_{a_1},x_1; n_2,t^{n_2}_{a_2},x_2)= -\phi^{(t^{n_1}_{a_1},t^{n_2}_{a_2})}(x_1,x_2)
+ \sum_{k=1}^{n_2} \Psi^{n_1,t^{n_1}_{a_1}}_{n_1-k}(x_1) \Phi^{n_2,t_{a_2}^{n_2}}_{n_2-k}(x_2).
\end{equation}
\end{remark}

\begin{remark}\label{RemarkTimeDepT}
Looking at the proof of the above theorem in~\cite{BF07} one also sees that the time evolutions $\cal T$ can be taken to be level-inhomogeneous, i.e., the ${\cal T}_{t_a^n,a_0^n}$ can be a function of $t_a^n, t_0^n$ and also of the level $n$. Such a situation occurs for the Wishart matrices case.
\end{remark}

The proof of Theorem~\ref{ThmPushASEP} given in~\cite{BF07} is based on the
algebraic formalism of~\cite{RB04}. Another proof can be found in Section 4.4 of~\cite{FN08}.
Although we stated Theorem~\ref{ThmPushASEP} for the case when all sets $\X_n$
are finite, one easily extends it to a more general setting. Indeed,
the determinantal formula for the correlation functions is an
algebraic identity, and the limit transition to the case when the $\X_n$'s
are allowed to be countably infinite is immediate, under the
assumption that all the sums needed to define the $*$-operations
above are absolutely convergent.

\section{Hermite polynomials}\label{AppHermite}
The Hermite polynomial of degree $n$ is denoted here $p_n(x)$. We use the normalization of~\cite{KS96}:
\begin{equation}\label{eq57}
\int_\R\dx x\, e^{-x^2} p_n(x) p_m(x)=\delta_{m,n} \sqrt{\pi} 2^n n!.
\end{equation}
There are two useful integral representations for the Hermite polynomials $p_n(x)$,
\begin{equation}\label{eqReprHermite}
\begin{aligned}
p_n(x)&=\frac{2^n}{\I\sqrt{\pi}}e^{x^2}\int_{\I\R+\e}\dx w\, e^{w^2-2xw}w^n,\\
p_n(x)&=\frac{n!}{2\pi\I}\oint_{\Gamma_0}\dx z\, e^{-(z^2-2xz)} z^{-(n+1)},
\end{aligned}
\end{equation}
as well as the identities (with $0<q<1$) which can be found in~\cite{Jo04,KS96}
\begin{equation}\label{eq8.7}
\begin{aligned}
\frac{1}{\sqrt{\pi(1-q^2)}}\exp\left(-\frac{(x-qy)^2}{1-q^2}\right) &= e^{-x^2}\sum_{k=0}^\infty \frac{p_k(x)p_k(y) q^k}{\sqrt{\pi} 2^{k} k!},\\
\int_x^\infty \dx y\, e^{-y^2} p_n(y)&= e^{-x^2} p_{n-1}(x),\\
p_n(x)&=(-1)^n p_n(-x).
\end{aligned}
\end{equation}
These identities can be useful to rewrite the double integral representation into an expression in terms of Hermite polynomials (as it was made e.g.\ in Lemma~24 of~\cite{BFS09} for the antisymmetric GUE minor kernel).

\section{Laguerre polynomials}\label{AppLaguerre}
The generalized Laguerre polynomials $L^p_k$ of degree $k$ and order $p$ are polynomials on $\R_+$ defined by
\begin{equation}
L^p_k(x)=\frac{x^{-p} e^x}{k!} \frac{\dx^k}{\dx x^k} (x^{p+k}e^{-x}).
\end{equation}
They satisfy the orthogonal relation
\begin{equation}\label{OrthoLaguerre}
\int_{\R_+} \dx x \, x^p e^{-x} L^p_k(x)L^p_\ell(x) = \frac{(p+k)!}{k!}\,\delta_{k,\ell}
\end{equation}
and have integral representations,
\begin{equation}
\begin{aligned}
L_k^p(x) &= \frac{1}{2\pi\I} \oint_{\Gamma_1} \dx w\, \frac{e^{-x (w-1)}w^{p+k}}{(w-1)^{k+1}}, \\
L_k^p(x) &= \frac{(p+k)!}{k!x^p} \frac{1}{2\pi \I} \oint_{\Gamma_0}\dx z\, \frac{e^{xz}(z-1)^k}{z^{p+k+1}}.
\end{aligned}
\end{equation}

\section{Harish-Chandra/Itzykson-Zuber formulas}\label{AppHCIZformulas}
Here we report the Harish-Chandra/Itzykson-Zuber formula as well as its generalization for rectangular matrices.

Let $A=\diag(a_1,\dots,a_N)$ and $B=\diag(b_1,\dots,b_N)$ two diagonal $N\times N$ matrices. Let $\dx\mu$ denote the Haar measure on the unitary group ${\cal U}(N)$. Then,
\begin{equation}\label{eqIZHC1}
\int_{{\cal U}(N)}\dx \mu(U) \exp\left(\Tr(A U B U^*)\right) =
\prod_{p=1}^{N-1} p! \,\frac{\det\left(e^{a_i b_j}\right)_{1\leq i,j\leq N}}{\Delta(a)\Delta(b)},
\end{equation}
where $\Delta(a)$ is the Vandermonde determinant of the vector $a=(a_1,\dots,a_N)$.

The extension to rectangular matrices can be found in section~3.2 of~\cite{ZJZ03} and was derived in~\cite{JSV96}. Let $A$ be a complex $N_1\times N_2$ matrix, $B$ a complex $N_2\times N_1$ matrix so that the $N_2\times N_2$ matrices $A^* A$ and $B B^*$ are diagonal with (real positive) eigenvalues $a=(a_1,\dots,a_{N_2})$ and $b=(b_1,\dots,b_{N_2})$ respectively. W.l.o.g.\ we assume $N_1\geq N_2$. Then,
\begin{multline}\label{eqIZHC2}
\int_{{\cal U}(N_2)}\dx\mu (U) \int_{{\cal U}(N_1)} \dx\mu(V) \exp\left(\Tr(A U B V^* + B^* U^* A^* V)\right)  \\
=\frac{\prod_{p=1}^{N_2-1} p!\prod_{q=1}^{N_1-1} q!}{\prod_{r=1}^{N_1-N_2-1} r!} \frac{\det\left(I_{N_1-N_2}(2\sqrt{a_i b_j})\right)_{1\leq i,j\leq N_2}}{\Delta(a)\Delta(b)\prod_{i=1}^{N_2} (a_i b_i)^{(N_1-N_2)/2}},
\end{multline}
where $I_n$ is the modified Bessel function defined by
\begin{equation}\label{eq91}
I_n(2x) =\frac{1}{2\pi \I}\oint_{\Gamma_0}\dx z\, \frac{e^{x(z+z^{-1})}}{z^{n+1}}
=\sum^{\infty}_{k=0}\frac{x^k}{k!}\frac{x^{k+|n|}}{(k+|n|)!},
\end{equation}
for $n\in\Z$.


\end{document}